\documentclass{article}
\pdftrailerid{}
\usepackage{amsmath}
\usepackage{amssymb}
\usepackage{amsthm}
\usepackage{graphicx} \usepackage{subcaption}
\usepackage{geometry}
\usepackage{parskip}
\usepackage[dvipsnames]{xcolor}
\usepackage{listings}
\usepackage{tcolorbox}
\usepackage{needspace}
\usepackage[numbers]{natbib}
\usepackage[hidelinks]{hyperref}
\usepackage{cleveref}

\newtheorem{theorem}{Theorem}[section]
\newtheorem{proposition}[theorem]{Proposition}
\DeclareMathOperator{\KL}{KL}
\usepackage{authblk}

\title{
\Large
Skeleton Coding: Program Structure as a Specification for Agents \\
}
\author{Aditya Golatkar$^{*}$\hspace{1.6em}Alessandro Achille$^{*}$\hspace{1.6em}Renos Zabounidis$^{*}$\hspace{1.6em}Vlad Sobal$^{*}$\hspace{1.6em}Yangruibo Ding\\[0.35em]
Baishakhi Ray\hspace{1.6em}Jonathan Buck\hspace{1.6em}Guy Van den Broeck\hspace{1.6em}Maryam Fazel\hspace{1.6em}Pietro Perona\hspace{1.6em}Wei Xia\\[0.35em]
Stefano Soatto}
\affil{AWS Agentic AI}
\date{}

\renewcommand{\paragraph}[1]{\textbf{#1}}

\definecolor{lstbg}{RGB}{248,248,250}
\definecolor{lstframe}{RGB}{200,200,210}
\definecolor{lstkw}{RGB}{0,92,197}
\definecolor{lstears}{RGB}{140,30,160}
\definecolor{lstsect}{RGB}{0,128,110}
\definecolor{lsthead}{RGB}{160,60,20}
\definecolor{lststr}{RGB}{3,106,7}
\definecolor{lstcmt}{RGB}{110,110,120}
\definecolor{skeletonbg}{RGB}{255,255,255}
\definecolor{skeletonframe}{RGB}{42,120,214}
\newsavebox{\figureoneplots}
\lstdefinelanguage{ears}{
    morekeywords=[3]{WHEN,IF,THEN,THE,SYSTEM,SHALL,WHERE,WHILE,CALLABLE,CALLER},
    morecomment=[l][\color{lsthead}\bfseries]{\#},
    sensitive=true,
}
\lstdefinelanguage{pyi}{
    morekeywords={class,def,self,None,True,False},
    morekeywords=[2]{Args,Returns,Raises,Ensures,Requires,Invariants},
    morekeywords=[3]{WHEN,IF,THEN,THE,SYSTEM,SHALL,WHERE,WHILE,CALLABLE,CALLER},
    sensitive=true,
}
\lstdefinelanguage{rustmini}{
    morekeywords={pub,fn,struct,impl,const,self,Self,let,mut,for,in,return,use,trait,enum,match,where},
    morecomment=[l]{//},
    morestring=[b]",
    sensitive=true,
}

\begin{document}

\maketitle
{\renewcommand{\thefootnote}{}\footnotetext{$^{*}$Equal contribution sorted alphabetically}\footnotetext{Corresponding authors: \texttt{agolatka@amazon.com}, \texttt{aachille@amazon.com}}}

\begin{abstract}
    Coding agents are typically provided end-to-end behavioral specifications that the code they produce should satisfy, such as tests and input--output examples. How many such specifications must a developer provide, and which ones, for the agent to produce the developer's intended program? We argue that this cannot be determined a priori, for it depends on the \emph{structure} of the implementation, which is left to the agent to decide. An agent first commits to a structure---a decomposition into data types, modules, and functions---which fixes a class of candidate programs; behavioral specifications then select a program within that class. A set of tests that suffices for one structure may leave another badly underdetermined, requiring different edge cases or exponentially more examples. Since the structure is chosen by the agent, developers cannot ensure their specification is complete without reverse-engineering it from thousands of lines of generated code. To resolve this, we introduce \textbf{Skeleton Coding}, in which developers jointly specify the structure---in high-level, code-like format---alongside behavioral specifications.
    Theoretically, we model AI-assisted coding as a control problem, and show that skeletons can reduce exponentially the number of tests developers must supply to steer an agent toward an acceptable program.  Empirically, across four models and three benchmarks, the best-performing skeleton specifications reduce mean test error by 48.4\% relative to the best end-to-end behavioral specifications. Even when the agent can query a complete reference implementation, adding a skeleton reduces mean test error relatively by up to 27.0\%. Skeletons also narrow the average gap between small and large models by about 52.4\% while lowering implementation cost by 15--26$\times$, and reduce code erosion by 24.7\% on average.
\end{abstract}

\begin{figure}[t]
    \centering
    \captionsetup[subfigure]{font=scriptsize,skip=2pt}
    \sbox{\figureoneplots}{\includegraphics[width=0.58\textwidth]{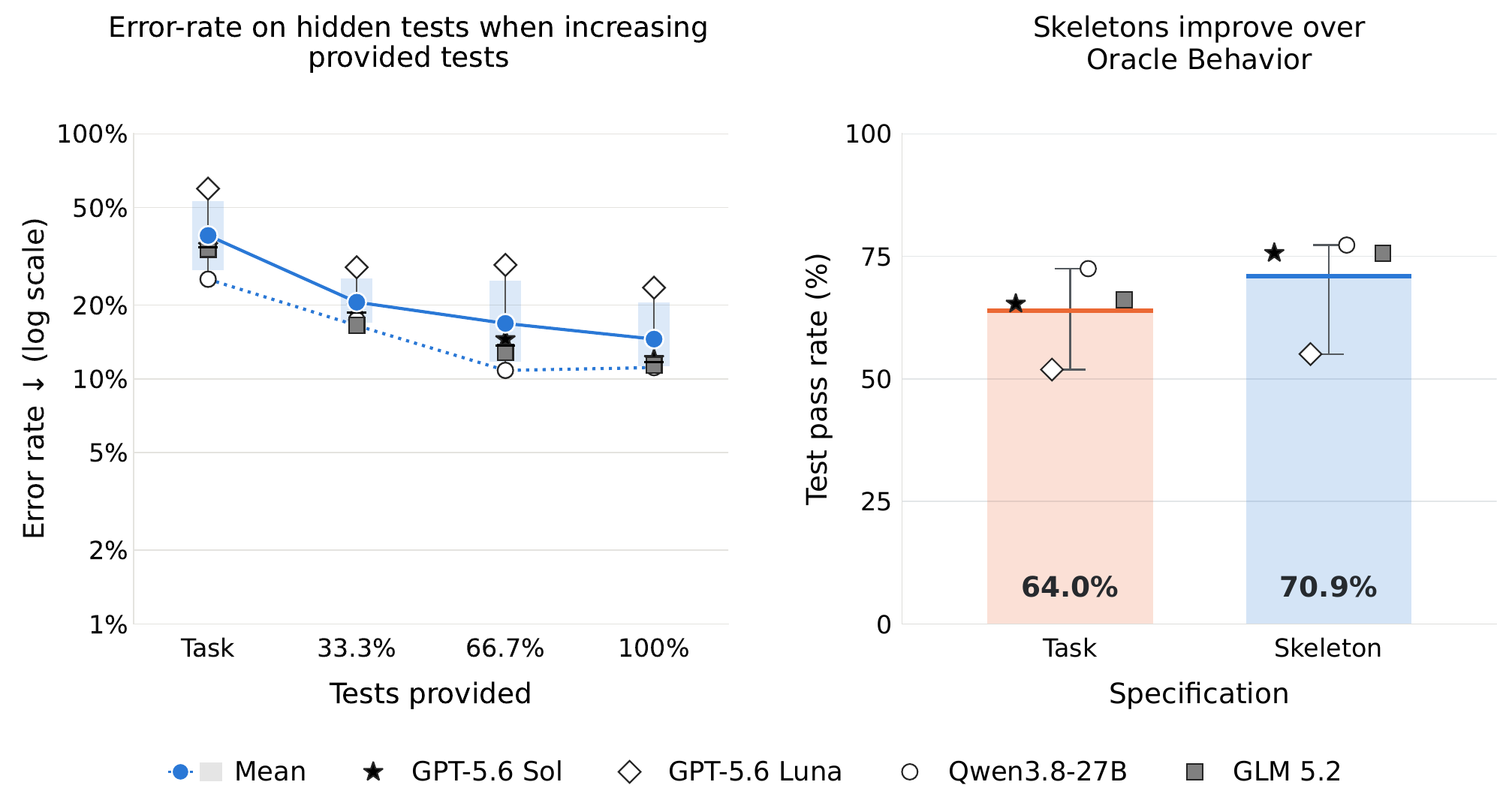}}
    \begin{minipage}[t]{0.58\textwidth}
        \vspace{0pt}
        \setlength{\parskip}{0pt}
        \usebox{\figureoneplots}
        \par\nointerlineskip
        \begin{subfigure}[t]{0.49\linewidth}
            \caption{}
            \label{fig:oracle_tests_scaling_a}
        \end{subfigure}\hfill
        \begin{subfigure}[t]{0.49\linewidth}
            \caption{}
            \label{fig:oracle_skeleton_comparison}
        \end{subfigure}
    \end{minipage}\hfill
    \begin{subfigure}[t]{0.40\textwidth}
        \vspace{0pt}
        \begin{minipage}[t][\ht\figureoneplots][t]{\linewidth}
        \setlength{\parskip}{0pt}
        {\centering\sffamily\fontsize{6}{7}\selectfont Skeletons with local contracts\par}
        \vspace{3pt}
        \begin{tcolorbox}[colback=skeletonbg,colframe=skeletonframe,
            boxrule=0.4pt,arc=1.3mm,boxsep=0pt,
            left=4pt,right=4pt,top=4pt,bottom=4pt,
            before skip=0pt,after skip=0pt]
\begin{lstlisting}[language=pyi,basicstyle=\ttfamily\fontsize{5.5}{6.5}\selectfont\color{black},columns=fixed,basewidth=0.5em,frame=none,backgroundcolor=\color{skeletonbg},xleftmargin=0pt,xrightmargin=0pt,aboveskip=0pt,belowskip=0pt]
class View:
    buffer: list[float]; offset: int; stride: int; size: int

def slice_view(v: View, lo: int, hi: int) -> View:
    """Return a slice that shares the original data buffer.

    Requires:
        THE CALLER SHALL ensure 0 <= lo <= hi <= v.size.
    Ensures:
        THE CALLABLE SHALL return r with:
            r.buffer is v.buffer; r.stride == v.stride,
            r.offset == v.offset + lo * v.stride,
            r.size == hi - lo.
        THE CALLABLE SHALL NOT modify v or its buffer.
    """
    ...
\end{lstlisting}
        \end{tcolorbox}
        \end{minipage}
        \caption{}
        \label{fig:local_contract_skeleton}
    \end{subfigure}
    \captionsetup{skip=4pt}
    \caption{\textbf{Behavioral specifications do not pinpoint correct programs.} \textbf{(Left)} We ask a coding agent to implement a code base from ProgramBench given its documentation and a fraction of its test suite. We then measure error rate across the hidden suite. On average, no number of tests provided results in a program that passes all hidden tests.
    \textbf{(Middle)} \textit{Even providing a full behavioral specification is not enough:} We measure the pass rate with access to an oracle providing the full and exact input-output behavior. Even so, adding a skeleton with local contracts increases the pass rate averaged across models by 10.9\% relative to the same setup without a skeleton on ProgramBench. \textbf{(Right)} 
    A skeleton specifying the local contract for slicing a strided view.}
    \label{fig:oracle_tests_scaling}
\end{figure}

\begin{figure}[t]
    \centering
    \includegraphics[width=\textwidth]{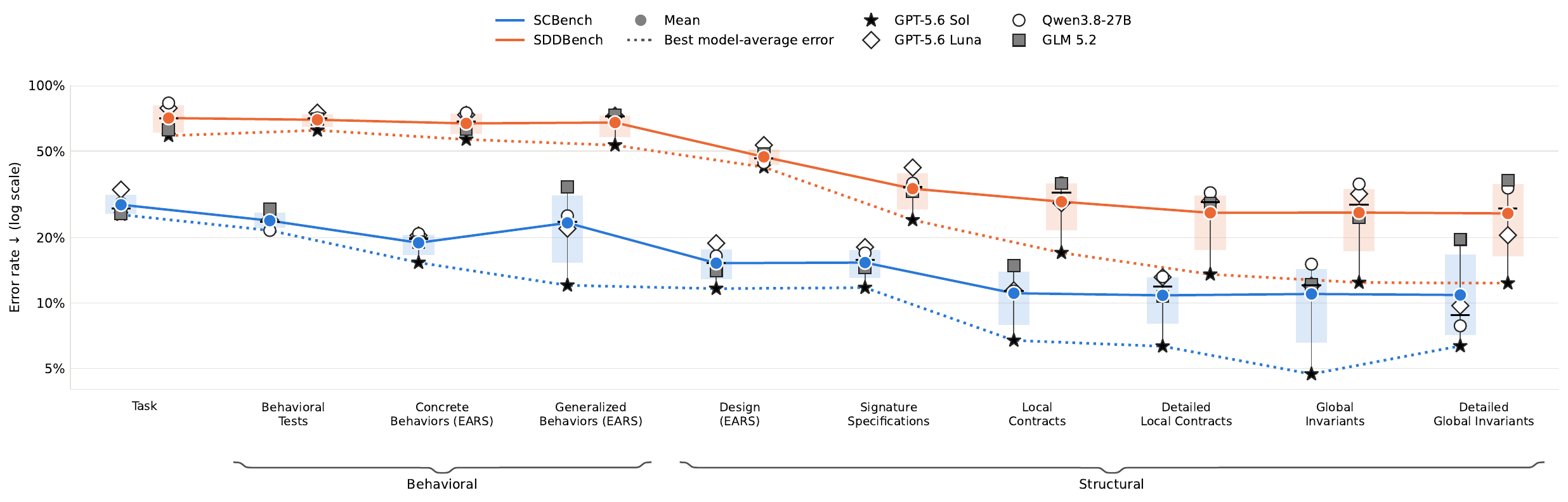}
    \caption{\textbf{Value of different types of specifications.}
    We generate specifications of different kinds and measure the error rate of a coding agent using them. No matter the format, behavioral specifications that describe only end-to-end behavior
    underperform specifications that describe the developer's intended structure. Of the structural specifications, code-like specifications that describe function signatures and invariants outperform design files in natural language. Adding overly detailed specifications decreases the accuracy of smaller models.
    }
    \label{fig:pass_rate}
\end{figure}

\section{Introduction}
\label{sec:introduction}

Coding agents are shifting the role of developers from writing code to \textit{controlling} an agent into producing an acceptable code base. This raises a key question: how can developers specify their intent and safely guide the agent toward producing an acceptable program, while minimizing their effort?

A common ideal has been to move toward the separation of concerns~\cite{dijkstra1982role} offered by specification-driven development~\cite{kiro2025}:
The developer's job should not be to dictate \textit{how} to implement the code base, but to specify \textit{what} behavior any acceptable result must have --- through prompts, examples, tests, or formal contracts --- and verify that the result fulfills that intent. 

Ideally, this separation of concerns between the \textit{what} and the \textit{how} promises to free developers from the laborious task of writing code. However, we show that it can actually \textit{exponentially increase} the effort required.

The separation fails because how many tests are needed, and which ones, is not a property of the function being implemented but of the \textit{structure} of its implementation.
Consider for example implementing the indexing rules of NumPy: fully specifying all their edge cases takes many tests and examples, yet most of them follow naturally from a single structural choice --- implementing arrays as strided views of memory~\cite{vanderwalt2011numpy,harris2020numpy} --- which no example or end-to-end test can express.
Yet it is the agent, not the developer, that chooses the structure. A test suite that suffices for one design can leave another badly underdetermined, so a developer cannot know whether their tests are sufficient without further extensive manual testing, or inspection of thousands of lines of generated code.

Based on this, we argue that the most effective control action for developers is to partially break the separation, and jointly specify structure and behavior.
To this effect, we introduce \textbf{Skeleton Coding}, in which developers specify the structure of an application directly, in high-level, code-like form: data types, module boundaries, function signatures, data flow, per-function contracts, and invariants --- leaving the actual implementation to the agent (\Cref{sec:spec-examples}). These are the decisions where skilled developers traditionally add the most value, and that end-to-end tests cannot express.

Formally, the coding agent faces an identification problem~\cite{astrom1971identification,gold1967identification,liu2016teaching}. Let $\mathcal H$ be the programs the agent can produce and $\mathcal G\subseteq\mathcal H$ the developer's goal set of acceptable programs. Given input--output observations $D$ of a reference program, the agent must find a program in $\mathcal G$. It first commits to a design---a decomposition into data types, modules, and functions---which restricts the search to a class $\mathcal H_S$, and tests then select a program within it. Even when the training of the model results in a strong prior, some tasks require exponentially many observations before the agent succeeds with high probability. \Cref{fig:oracle_tests_scaling}~(left) illustrates the problem: supplying more tests raises the pass rate, yet on average no number of tests yields a program that passes the full suite. From the developer's side, this is a control problem: find the cheapest specification $D$ that makes the agent sample a program from $\mathcal G$ (\Cref{sec:method}). Unfortunately, the optimal $D$ depends on $\mathcal H_S$, which the developer does not see.

Skeleton coding addresses the issue by letting 
the developer supply $\mathcal H_S$.
Instead of carving out the goal set one test at a time, developers can shrink the hypothesis class and provide the optimal tests for it. This can cut the number of needed tests exponentially, for several reasons. 
First, skeletons can quickly reduce the dimension of the hypothesis space. But, less intuitively, even when two skeletons lead to search spaces of the same size, a well-chosen hypothesis class may need exponentially fewer tests to identify the correct program than a poorly chosen one (\Cref{thm:structural_end_to_end_separation}).  Second, with end-to-end tests alone the program's internal state is hidden: an error in one component is visible only if it propagates to the output. 
By exposing intermediate interfaces, a skeleton makes that state observable, so for example a dataflow composing $n$ function can be identified with $O(n)$ local tests, whereas end-to-end tests may need $O(2^n)$ (\Cref{thm:blue_local_separation}).
Third, some requirements, such as internal implementation details, are not observable from outputs at all and no number of tests can enforce them. A skeleton can instead state them directly (\Cref{prop:blue_behavioral_obstruction}).

Developers rarely begin with a complete specifications; they adapt it to correct issues and steer the agent toward the goal. However, since prompting a model does not perform Bayesian conditioning~\cite{falck2024martingale}, there is no guarantee that providing additional specifications will eventually converge to the goal.
However, adding formally verifiable contracts~\cite{leino2010dafny,lattuada2023verus,demoura2021lean4} to the skeletons circumvent the issue (\Cref{prop:component_rejection}). Even when informal contracts are not used, an imperfect verifier can still ensure faster convergence to a valid implementation (\Cref{prop:judge}).

Because skeletons resemble code, teams can version, diff, review, and revise them alongside the implementation as requirements evolve. A persistent skeleton also records which decisions are requirements, so agents can rewrite cleanly when adding functionality instead of accreting patches to preserve accidental behavior, a major contributor to ``AI code slop'' on long-lived code bases~\cite{orlanski2026slopcodebench}.

\paragraph{Empirical results.}
To show the value of skeletons and structural specifications, we compare the pass-rate with and without skeletons across various coding benchmarks and architectures. On SCBench~\cite{orlanski2026slopcodebench} and SDDBench~(\Cref{app:sddbench}), structural specifications raise mean pass rate from 72\% to 89\% and from 29.0\% to 74.2\%, whereas even the best purely behavioral specifications gain at most 9.4 percentage points (\Cref{fig:pass_rate}). Surprisingly, on ProgramBench~\cite{yang2026programbench}, where the coding agent is provided with a \textit{full} behavioral-specifications (an oracle binary), skeletons still raise relative pass rate by 10.9\% (\Cref{fig:oracle_tests_scaling}, right).
Skeletons also narrow the gap between small and large models~\cite{openai2026gpt56}: with a detailed skeleton, a small model comes within 5--8 percentage points of the largest model's pass rate at 15--26$\times$ lower implementation cost (\Cref{fig:cost_performance}).
Aside from outperforming behavioral specifications, our proposed code-like skeletons outperform structural specifications provided as prose design documents (\Cref{fig:architecture_scaling}).
In terms of code maintainability, skeletons reduce code erosion by 24.7\% where behavioral specifications increase it (\Cref{fig:erosion}). In iterative repair, they reach similar error rates with 32\% fewer queries to the reference executable (\Cref{fig:brownfield_skeleton_coding}).

\setcounter{topnumber}{1} \section{Controlling Coding Agents with Skeletons}
\label{sec:method}

\begin{figure}[t]
    \centering
    \includegraphics[width=0.8\textwidth]{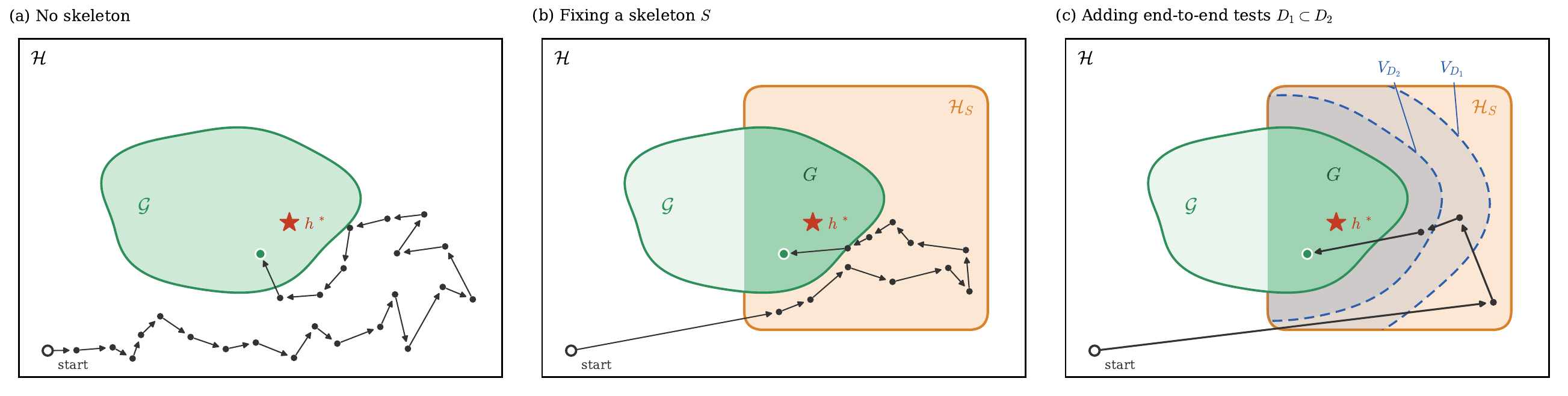}
    \caption{\textbf{Skeletons and tests shrink the set of programs the agent considers}:
    (a)~Without a skeleton, the agent's prior spreads over all programs. (b)~Fixing a
    skeleton restricts the agent to programs with that skeleton. (c)~Each set of
    end-to-end tests further keeps only the programs that agree with the reference on
    the tested inputs. 
    }
    \label{fig:set_overview}
\end{figure}

Let $\mathcal H$ be the set of all programs the agent can produce. Let $\mathcal G\subseteq\mathcal H$ be a subset of programs that the developer would consider an \textit{acceptable} realization of their intent. The goal of the developer is to control the agent into producing a program in $G$. To do so, the developer can prompt the agent with specifications describing their intent.

The simplest form of specifications are \textbf{behavioral specifications}, which describe the expected input--output
behavior of programs in $G$ through a set of examples
$D = \{(x_i, y_i)\}_{i=1}^N$. Note that end-to-end tests are a special case with finitely many examples, while property testing~\cite{claessen2000quickcheck} and 
formal contracts~\cite{meyer1992designbycontract} can be viewed as specifications that implicitly
characterize infinitely many such pairs. For clarity, in the following we focus on the case where $D$ is finite and given by end-to-end tests.

A good set $D$ of tests is such that, after prompting the agent with $D$, it will generate with high probability a program in $G$. That is, letting $p_0(h)$ denote the agent's distribution over programs after prompting with the task alone, we want:
\[
p_0(G|D) > 1 - \delta
\]
for a small $\delta>0$. However, producing tests $D$ is expensive for the developer. Hence, we are interested in the minimum number of tests needed:
\begin{equation}
    N_\delta=\min\{|D|:p_0(\mathcal G\mid D)\ge1-\delta\},
\end{equation}
where the minimum is over all test suites~\cite{goldman1995teaching,liu2016teaching,angluin1988queries}.

A program, however, can also be specified by describing its architecture. We call a \textbf{skeleton} $S$ as a partial abstraction of a program, describing its data types, module boundaries, permitted calls and data flow, and function
declarations with signatures, but omits their implementation (see \Cref{fig:oracle_tests_scaling}).

Specifying a skeleton $S$ restricts the search space to the subset $H_S \subseteq H$ of
programs sharing the same skeleton. In turn, this can drastically change the number of tests required to generate a program in $G$. 
Let $p_S=p_0(\cdot \mid\mathcal H_S)$ and denote the agent distribution restricted to by $H_S$ (\Cref{fig:set_overview}b), the minimum number of tests required when using the skeleton $S$ is:
\begin{equation}\label{eq:test_complexity}
    N_\delta(S)=\min\{|D|:p_S(\mathcal G\mid D)\ge1-\delta\},
\end{equation}
The rest of this section shows that appropriately specifying $S$ can, with little developer effort, exponentially reduce $N_\delta(S)$.

But first, what happens when the developer does not specify a skeleton? The agent will have to select an architecture based on $D$ alone. Let $p_0(S|D)$ denote the probability of the model selecting a given architecture, then the probability of success is an average over the agent's choices:
\begin{equation}\label{eq:goal_mixture}
    p_0(\mathcal G\mid D)=\sum_S p_0(S\mid D)\,p_S(\mathcal G\mid D).
\end{equation}
Tests act on both factors: they select a program within each skeleton, and
they reweigh the skeletons by the prior mass of their surviving programs. A
developer who supplies a skeleton $S$ removes the agent's choice and
replaces the average with a single term. Even when \textit{no tests} are specified, this can still raise the
success probability exactly when the success rate of the developer chosen skeleton $p_S(\mathcal G)$ exceeds the agent's average $p_0(\mathcal G)$.

\subsection{Structure determines what tests are needed}
\label{sec:structure_tests}

Specifying a skeleton $S$ restricts the hypothesis class $H_S$ the coding agent needs to consider. By itself, this can quickly reduce the number of tests $N_\delta(S)$ needed to reach the goal (\Cref{fig:set_overview}) by simply reducing the search space (when $H_S$ intersects with $\mathcal{G}$).
However, even when the two skeletons admit the same number of programs, a developer can reduce exponentially the number of tests needed by designing the skeleton smartly, as the following theorem shows.

\begin{theorem}[Exponential separation between skeletons]
\label{thm:structural_end_to_end_separation}
For every $d\ge1$, there are a target function and two skeletons
$S_A,S_B$ for it, each admitting $2^d$ programs with distinct behaviors
under a uniform prior, such that for every $0\le\delta<1/2$
\[
    N_\delta(S_A)=2^d-1,\qquad N_\delta(S_B)=d.
\]
Moreover, a suite of $d$ tests that guarantees success under $S_B$ would lead to success probability only $p(G|S_A) = 1/(2^d-d)$ under $S_A$.
\end{theorem}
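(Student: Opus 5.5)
Our plan is an explicit construction, followed by a counting argument for each skeleton. Take inputs $x\in\{0,1\}^d$ and let the target $f^\ast$ be a fixed function on $\{0,1\}^d$. The goal set $\mathcal G$ is the set of programs whose behavior equals $f^\ast$ on all $2^d$ inputs.

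\textbf{The two skeletons.}
\begin{itemize}
\item[(i)] For $S_A$ we use a lookup-table skeleton with a single free ``patch'' slot. Write $f^\ast=g+$ a correction at one unknown point $z\in\{0,1\}^d$. Concretely, $\mathcal H_{S_A}=\{h_z:z\in\{0,1\}^d\}$, where $h_z(x)=f^\ast(x)$ for $x\neq z$ and $h_z(z)=1-f^\ast(z)$, together with the true program. To keep exactly $2^d$ programs, use the $2^d-1$ ``point-flip'' programs plus $f^\ast$ itself, i.e.\ index programs by $z\in\{0,1\}^d$ and let $z=0$ correspond to no flip. Then each test at input $x$ eliminates at most one wrong program.
\item[(ii)] For $S_B$ we use a linear/parity skeleton, $h_a(x)=\langle a,x\rangle \bmod 2$ with $a\in\{0,1\}^d$. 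Take $f^\ast=h_{a^\ast}$, which also lies in $\mathcal H_{S_A}$ if we define the base table accordingly. Both classes have $2^d$ programs with distinct behaviors, under a uniform prior.
\end{itemize}

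\textbf{Computing $N_\delta(S_B)$.} Since prompting with $D$ is modeled as conditioning, $p_S(\mathcal G\mid D)=1/|\{h\in\mathcal H_S: h\text{ consistent with }D\}|$ because $|\mathcal G\cap\mathcal H_S|=1$. Hence for $\delta<1/2$ success requires exactly one survivor, so $N_\delta(S)$ is the teaching dimension of the target in $\mathcal H_S$.
\begin{itemize}
\item Upper bound $N_\delta(S_B)\le d$: the $d$ standard basis tests $e_i$ reveal $a^\ast$.
\item Lower bound $N_\delta(S_B)\ge d$: each test is one linear equation, and the solution set remains an affine subspace of dimension at least $d-k$ after $k$ tests, so fewer than $d$ tests leave at least $2$ survivors and give probability at most $1/2$.
\end{itemize}

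\textbf{Computing $N_\delta(S_A)$.} Each test at $x$ rules out only the flip program $h_x$. Every other wrong program agrees with $f^\ast$ at $x$, so it survives. Thus all $2^d-1$ flip points must be tested, giving $N_\delta(S_A)=2^d-1$. The complement-of-the-target bookkeeping is routine.

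\textbf{The ``moreover'' clause.} Feed the $d$ basis tests $e_1,\dots,e_d$ to $S_A$. Only the $d$ programs flipping at those inputs are eliminated, so $2^d-d$ programs survive and the success probability is $1/(2^d-d)$. This needs the $e_i$ to be distinct from the no-flip index, which holds if that index is chosen as $0$.

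\textbf{Expected main obstacle.} The hardest step is the exact lower bound in the $S_A$ case, together with making the framework precise: justifying that $p_S(\cdot\mid D)$ is uniform over consistent programs and that $\delta<1/2$ forces a unique survivor. The first thing to check is that with two or more survivors the success probability is at most $1/2$, which fails the threshold $1-\delta>1/2$. Nothing from earlier in the paper beyond the definitions of $N_\delta(S)$ and $p_S$ is needed.
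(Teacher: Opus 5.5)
Your proposal is correct and follows the paper's own proof: the paper uses the parity target (your $a^\ast=\mathbf 1$), takes $S_A$ to be the target plus its $2^d-1$ single-point flips, and takes $S_B$ to be all $\mathbb F_2$-linear functions, written as $f^\ast\oplus\langle a,x\rangle$. It then argues exactly as you do: under a uniform prior and $\delta<1/2$, success requires a unique surviving program, the rank bound gives $N_\delta(S_B)=d$, each test eliminates at most one flip so $N_\delta(S_A)=2^d-1$, and the basis suite leaves $2^d-d$ survivors under $S_A$, giving success probability $1/(2^d-d)$.
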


\begin{figure}[t]
    \centering
    \includegraphics[width=0.7\textwidth]{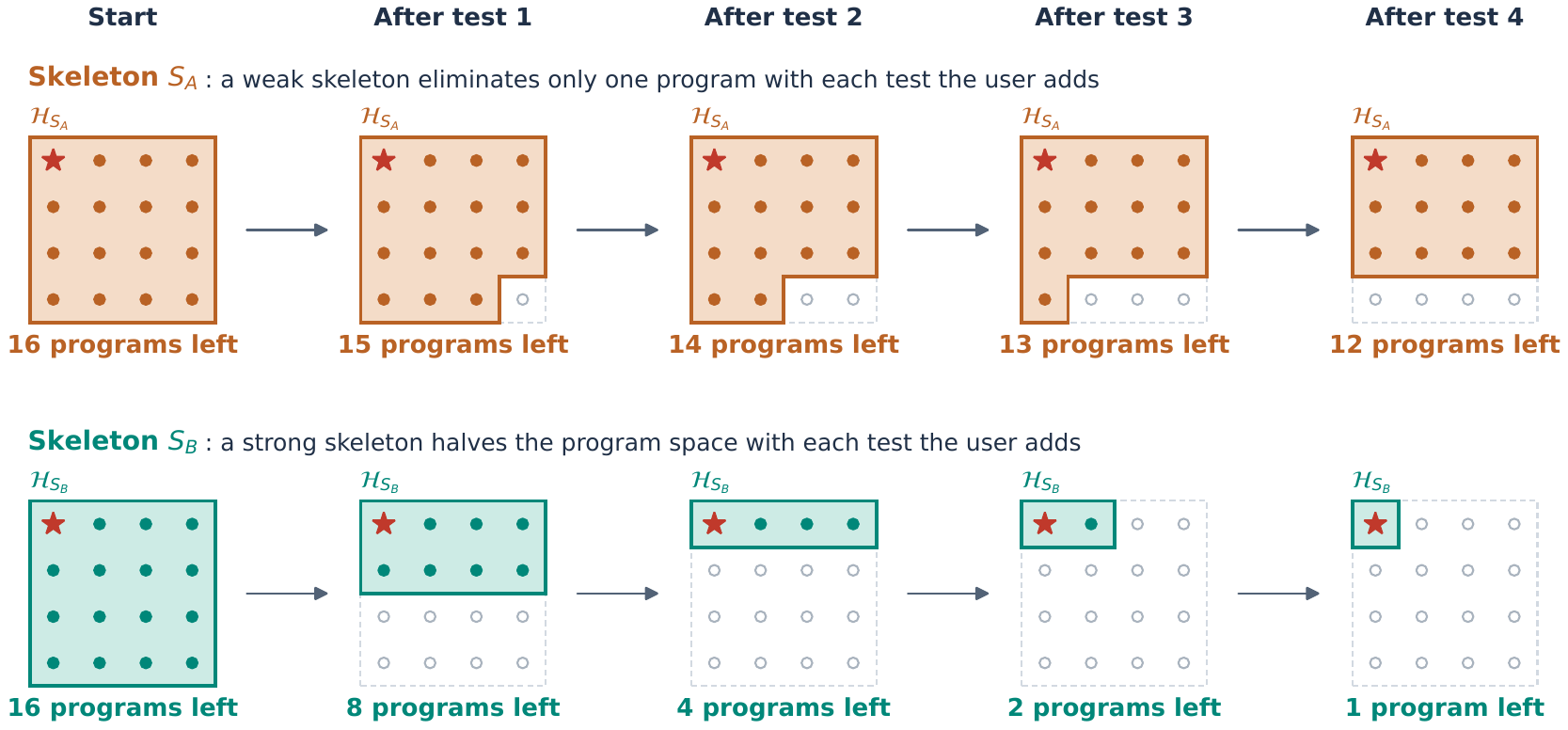}
    \caption{\textbf{A strong skeleton makes better use of tests.} Visualization of how tests affect the two skeletons of \Cref{thm:structural_end_to_end_separation}. }
    \label{fig:skeleton_test_efficiency}

\end{figure}

The two skeletons (described in the Appendix) differ in how \textit{informative} end-to-end tests are. Under $S_A$, each test excludes at most a single program, while under
$S_B$, each test rules out
half of the incorrect programs (\Cref{fig:skeleton_test_efficiency}).

The second part of the theorem shows why a developer cannot determine the required test suite to guarantee success
without knowing the structure of the yet-to-be-written program. 
Suppose the developer creates an optimal test suite $|D| = d$ that guarantees success under the assumption the agent uses the architecture $S_B$. If the agent chooses $S_A$ instead, the same optimal test suite will fail with high probability.

One may wonder if providing $D$ designed for $S_B$ should not implicitly bias the agent toward picking $S_B$. But the reverse is actually true: by \Cref{eq:goal_mixture},
conditioning on $D$  shifts the agent toward the skeleton under which
they are uninformative, because that skeleton loses the fewest programs. In \Cref{fig:oracle_tests_scaling} (right), we show this empirically: even
when the agent has a full behavioral specification, supplying a
skeleton raises its pass rate, although the observations available to the agent are unchanged.

\subsection{Interfaces make internal state observable}
\label{sec:observability}

A second way in which skeletons reduce the number of tests needed is by changing what tests can observe~\cite{kalman1963mathematical}. An end-to-end test exposes a fault only
if the faulty code runs, corrupts the program state, and the corruption
propagates to the output~\cite{voas1992pie}. When later computation masks
the corruption, a faulty program returns the correct output on almost every
input, and end-to-end tests must search for the few inputs that reveal it.
The interfaces of a skeleton name the intermediate values, so each component
can be tested directly.

Consider for example a program that reads one input bit at each of $n$ stages and
carries one bit of state:
\[
    z_i=h_i(x_i,z_{i-1}),\qquad z_0=0,\qquad
    F_h(x_1,\ldots,x_n)=z_n,
    \qquad h_i:\{0,1\}^2\to\{0,1\}.
\]
Let $S$ be this skeleton. An end-to-end test observes only $F_h(x)$. However, having specified $S$, the developer can also specify local tests for each of the components $h_i$, and observe the intermediate hidden state $z_i$. The following theorem shows that this too reduces exponentially the number of tests the developer needs to specify.

\begin{theorem}[Exponential separation between local and end-to-end tests]
\label{thm:blue_local_separation}
Let the goal $\mathcal G\cap\mathcal H_S=\{h\in\mathcal H_S:F_h\equiv0\}$ be to implement the identically zero function. There is a prior over the transition tables, with
$p_S(\mathcal G)=1/2$, under which
\[
    N_\delta(S)\ge\left\lceil 2^n\,\frac{1-2\delta}{1-\delta}\right\rceil
    \qquad\text{for every }0\le\delta<1/2,
\]
and in particular $N_0(S)=2^n$. Conversely, under any prior giving mass to the correct solution, $4n$ local tests suffice to identify
$F_h$.
\end{theorem}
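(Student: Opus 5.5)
We will build the prior as a mixture of the zero program and ``needle'' programs, where each needle is wrong on exactly one input. Pick one transition family $h^{0}$ with $F_{h^0}\equiv 0$, for instance $h_i\equiv 0$ for all $i$. For each $a\in\{0,1\}^n$ pick a family $h^{a}$ with $F_{h^a}(x)=\mathbb 1[x=a]$. One concrete choice carries a ``still matching'' bit: set $h_1(x_1,0)=\mathbb 1[x_1=a_1]$ and $h_i(x_i,z)=z\wedge\mathbb 1[x_i=a_i]$ for $i\ge2$. Each $h_i$ is then a valid table on $\{0,1\}^2$, and $F_{h^a}(x)=1$ iff $x=a$. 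The prior puts mass $1/2$ on $h^0$ and $1/(2^{n+1})$ on each $h^a$, so $p_S(\mathcal G)=1/2$, since no $h^a$ computes the zero function. We should check that the $2^n$ needle tables are distinct from each other and from $h^0$, which is clear because their functions differ.

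For the lower bound, take any suite $D$ of $N$ end-to-end tests that is consistent with the goal, so every test has the form $(x,0)$. We may assume the tests use distinct inputs, because repeating an input gives no new information. Conditioning removes exactly the needles $h^a$ with $a$ among the tested inputs. What survives is $h^0$ with mass $1/2$ and $2^n-N$ needles with mass $2^{-(n+1)}$ each. Hence
\[
p_S(\mathcal G\mid D)=\frac{1/2}{1/2+(2^n-N)2^{-(n+1)}}=\frac{2^n}{2^{n+1}-N}.
\]
Requiring this to be at least $1-\delta$ and solving gives $N\ge 2^{n+1}-2^n/(1-\delta)=2^n\frac{1-2\delta}{1-\delta}$. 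Integrality then gives the ceiling, and setting $\delta=0$ gives $N_0(S)=2^n$. If $D$ contained an inconsistent pair $(x,1)$, it would condition $\mathcal G$ away entirely, so such suites never help. This step is routine.

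For the upper bound, a local test queries one component $h_i$ at one input pair $(x_i,z_{i-1})\in\{0,1\}^2$ and observes $z_i$. Querying all four entries of each of the $n$ tables determines $h$ completely, using $4n$ tests, and therefore determines $F_h$. So $p_S(\cdot\mid D)$ concentrates on the tables consistent with the reference, and their function is $F_h$. For the reference $h^0$ this is the identically zero function, so $p_S(\mathcal G\mid D)=1$ whenever the prior gives the correct solution positive mass, which is needed for the conditioning to be well defined. I expect the main subtlety to lie here, in formalizing what a ``local test'' observes. If tests are allowed only on reachable states, say $z_{i-1}=1$ is unreachable, then fewer tests suffice, and unreached entries do not affect $F_h$ anyway. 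I would therefore state the upper bound in terms of identifying $F_h$ rather than $h$, which is what the theorem claims.
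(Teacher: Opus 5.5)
Your proposal is correct and follows the paper's proof essentially step for step: the same ``still matching'' needle construction $h_1(x,z)=\mathbf 1\{x=a_1\}$, $h_i(x,z)=z\land\mathbf 1\{x=a_i\}$ realizing the singleton indicators, the same prior ($1/2$ on the all-zero tables, $2^{-(n+1)}$ on each needle), the same posterior $2^n/(2^{n+1}-m)$ for $m$ distinct zero-labeled tests, and the same four-entries-per-table argument for the $4n$ local tests. The one cosmetic point is that you define $h_1$ only at $z=0$; any completion works because $z_0=0$, for instance the paper's choice $h_1(x,z)=\mathbf 1\{x=a_1\}$ for both values of $z$.
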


Note that the program has only $4n$ unknown bits (the truth table of each $h_i$), yet its output can hide
where a disagreement arose. In the proof, each incorrect program differs
from the reference on a single input~\cite{liu2016teaching}, which
end-to-end tests must find by search. A local test needs no such search,
because it observes a component before later computation can mask
it~\cite{freedman1991testability}. The skeleton provides this observability
through its interfaces and contracts. For example, in
\Cref{fig:oracle_tests_scaling} (left), we show that although the agent can
run the reference executable on any input, its pass rate approaches that of
the full test suite only as the evaluation tests themselves are supplied.
The bottleneck is finding the inputs that reveal an error.

\subsection{Some requirements are invisible to output tests}
\label{sec:unobservable}

The acceptable set $G$ could exclude programs that compute the same
function, but whose implementation the developer does not consider acceptable. For example, a developer may exclude implementations that use too many resources or have poorly maintainable code. However, those programs cannot be distinguished by end-to-end
tests that observes only the values a program returns as the next proposition quantifies.

\begin{proposition}[Behavioral tests preserve odds within a behavior class]
\label{prop:blue_behavioral_obstruction}
Let $h_0$ be a reference implementation, and let $[h_0] \subset H_S$ be the programs
in $\mathcal H_S$ that have the same input-output behavior. Suppose the goal set $G \subset [h_0]$ considers only a subset of those acceptable. Then, for every test set $D$:
\[
    p_S(\mathcal G\mid D)\le \frac{p_S(\mathcal G)}{p_S([h_0])},
\]
so $N_\delta(S)=\infty$ whenever $1-\delta$ exceeds this ratio.
\end{proposition}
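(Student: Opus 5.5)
The argument rests on a single observation. The conditioning on a test set $D$ is the same conditioning used throughout the section: the agent's posterior is $p_S(h\mid D)\propto p_S(h)\,\mathbf 1[h\text{ consistent with }D]$, where consistency means $h(x_i)=y_i$ for all $(x_i,y_i)\in D$. Since $D$ is a set of input--output pairs, whether a program is consistent with $D$ depends only on its input--output behavior.

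\textbf{Step 1: everything in $[h_0]$ survives together.} We assume the tests are produced by the reference program, so $y_i=h_0(x_i)$, which is the setting of the section. Every $h\in[h_0]$ agrees with $h_0$ on all inputs, so every such $h$ is consistent with $D$. Hence
\[
p_S(\mathcal G\mid D)=\frac{p_S(\mathcal G)}{p_S(C_D)},
\]
where $C_D\subseteq\mathcal H_S$ is the set of programs consistent with $D$. This uses $\mathcal G\subseteq[h_0]\subseteq C_D$.

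\textbf{Step 2: bound the denominator.} Because $[h_0]\subseteq C_D$, we have $p_S(C_D)\ge p_S([h_0])$, and therefore
\[
p_S(\mathcal G\mid D)\le \frac{p_S(\mathcal G)}{p_S([h_0])}.
\]
Applying the same identity to any subset $A\subseteq[h_0]$ gives $p_S(\mathcal G\mid D)/p_S(A\mid D)=p_S(\mathcal G)/p_S(A)$. This is the ``odds preserved'' reading in the proposition's title: no test can change the relative weight of acceptable versus unacceptable implementations of the same behavior.

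\textbf{Step 3: the consequence for $N_\delta(S)$.} The bound holds uniformly over all finite $D$. So if $1-\delta$ exceeds the ratio $p_S(\mathcal G)/p_S([h_0])$, no $D$ satisfies $p_S(\mathcal G\mid D)\ge1-\delta$. The minimum defining $N_\delta(S)$ in \eqref{eq:test_complexity} is then over an empty set, and by convention it equals $\infty$.

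\textbf{Main obstacle: the modelling conventions.} The mathematics is immediate; the care is in fixing conventions. First, we must assume that tests are consistent with $h_0$. If the labels $y_i$ came from some other reference, $[h_0]$ could be eliminated entirely, but then $\mathcal G$ would be eliminated as well and the bound still holds, with the posterior equal to $0$ or undefined. Second, we must assume $p_S(C_D)>0$ so that the posterior is defined, and this follows from $p_S([h_0])\ge p_S(\mathcal G)>0$ whenever $\mathcal G$ carries prior mass. If $p_S(\mathcal G)=0$ the statement is trivial. I would state both conventions explicitly at the start of the proof.
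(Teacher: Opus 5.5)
Your proof is correct and follows the paper's argument: every member of $[h_0]$ agrees with the reference on all tested inputs, so $\mathcal G\subseteq[h_0]\subseteq V_D$, exact conditioning gives $p_S(\mathcal G\mid D)=p_S(\mathcal G)/p_S(V_D)\le p_S(\mathcal G)/p_S([h_0])$, and because this bound holds for every $D$, $N_\delta(S)=\infty$. The only convention the paper states explicitly is $p_S([h_0])>0$, which is exactly what makes the posterior and the ratio well defined, and it is slightly weaker than your assumption $p_S(\mathcal G)>0$.
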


In this situation, the only control input developers have to increase the chance of generating an acceptable program is to specify the skeleton $S$. Code quality is one such requirement. In \Cref{fig:erosion},
we show that structural specifications reduce erosion relative to the task-only condition, while behavioral
specifications can actually increase it.

\subsection{What makes a good skeleton}
\label{sec:skeleton_cost}

In \cref{sec:observability}, we have seen that skeletons can reduce the developer's effort by allowing specification of internal components. However, a good skeleton should also structure the implementation to reduce cross-component dependencies~\cite{parnas1972criteria}.
Let $h_1,\ldots,h_k$ be the
components of a program in $\mathcal H_S$. Let $N_i(G_i)$ be the smallest number of local tests to specify a behavior $G_i$ for the component $h_i$.
Even when each component has an acceptable behavior, their composition might not~\cite{benveniste2018contracts} and additional end-to-end tests need to be provided to pin-point a globally coherent implementation. Let $J_S(G_1,\ldots,G_k)$ be the smallest number of end-to-end tests that rule out unacceptable programs in
$\mathcal H_S\cap\prod_i G_i$.

\begin{theorem}[The cost of a skeleton splits into description, local tests, and compatibility]
\label{thm:blue_cost_decomposition}
For every skeleton $S$ with an acceptable reference
$h^\ast\in\mathcal H_S$ and $p_S(h)>0$ for every $h\in\mathcal H_S$,
where a guarantee means posterior correctness one,
\[
    \mathcal N_0(S)
    =
    \min_{G_1,\ldots,G_k}
    \Big[
         \underbrace{\textstyle \sum_i N_i(G_i)}_{\text{local tests}}
        +\underbrace{J_S(G_1,\ldots,G_k)}_{\text{compatibility}}
    \Big],
\]
where the minimum is over sets containing the reference components.
Moreover, $J_S(G_1,\ldots,G_k)=0$ if and only if
$\mathcal H_S\cap\prod_i G_i\subseteq G$.
\end{theorem}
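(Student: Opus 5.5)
The plan is to prove the identity through two inequalities, after first making the objects precise. In the deterministic regime ($\delta=0$), with $p_S(h)>0$ for every $h\in\mathcal H_S$, the posterior after a test suite $D$ is supported exactly on the programs in $\mathcal H_S$ consistent with $D$. So $p_S(\mathcal G\mid D)=1$ holds if and only if every program of $\mathcal H_S$ surviving $D$ lies in $\mathcal G$. I will read $\mathcal N_0(S)$ as the minimal size of a mixed suite $D=D_{\mathrm{loc}}\cup D_{\mathrm{e2e}}$. Here local tests constrain the input--output behavior of a single component $h_i$ at an interface exposed by $S$, and end-to-end tests constrain the composed program. For each component, any set of local tests $D_i$ determines the set $G_i(D_i)$ of component behaviors consistent with it. This set contains the reference component $h_i^\ast$, since the tests are drawn from the reference. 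I take $N_i(G_i)$ to be the least $|D_i|$ with $G_i(D_i)\subseteq G_i$, and $J_S(G_1,\dots,G_k)$ to be the least number of end-to-end tests whose survivors within $\mathcal H_S\cap\prod_i G_i$ all lie in $\mathcal G$.

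For the upper bound ($\le$), fix sets $G_i\ni h_i^\ast$. I choose optimal local suites $D_i$ with $G_i(D_i)\subseteq G_i$, and an optimal end-to-end suite $E$ witnessing $J_S(G_1,\dots,G_k)$. The combined suite has size $\sum_i N_i(G_i)+J_S$. Its survivors lie in $\mathcal H_S\cap\prod_i G_i(D_i)\subseteq\mathcal H_S\cap\prod_i G_i$ and are consistent with $E$, so they lie in $\mathcal G$. The posterior correctness is therefore one. Taking the minimum over the $G_i$ gives the upper bound.

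For the lower bound ($\ge$), take an optimal guaranteeing suite $D$ and split it by type. Local tests only constrain components separately, so the set of programs consistent with the local part is exactly $\mathcal H_S\cap\prod_i G_i(D_i)$, where $D_i$ are the local tests on component $i$. This product structure is the key step: I need the skeleton's interfaces to make local constraints factorize, which I will state as part of the model of local tests. Setting $G_i:=G_i(D_i)$, which contains $h_i^\ast$, gives $N_i(G_i)\le|D_i|$. The remaining end-to-end tests eliminate every unacceptable program in $\mathcal H_S\cap\prod_i G_i$, so $J_S(G_1,\dots,G_k)\le|D_{\mathrm{e2e}}|$. Summing these bounds gives $\mathcal N_0(S)\ge\sum_iN_i(G_i)+J_S\ge\min[\cdots]$.

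For the final claim, if $\mathcal H_S\cap\prod_iG_i\subseteq\mathcal G$, the empty suite already works, so $J_S=0$. Conversely, $J_S=0$ means the empty suite leaves every program of $\mathcal H_S\cap\prod_iG_i$ alive with positive prior mass. Correctness one therefore forces containment in $\mathcal G$.

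The main obstacle is definitional rather than computational. $N_i(G_i)$ must be interpreted as the cost of restricting survivors into $G_i$ rather than of pinning them to exactly $G_i$; otherwise the lower bound can fail when $G_i(D_i)$ is not a set that local tests can carve out exactly. Separately, local tests must be specified so that they act coordinatewise on the components. I would state both conventions explicitly at the start of the proof.
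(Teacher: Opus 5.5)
Your proposal is correct and follows essentially the same argument as the paper: the upper bound combines local suites that exclude the complements of the $G_i$ with a joint suite witnessing $J_S$, the lower bound takes a successful mixed suite and sets $G_i$ to the slot-$i$ local survivors, and the zero-cost claim holds because without joint tests every member of $\mathcal H_S\cap\prod_i G_i$ survives. Your closing worry is unnecessary, though: $G_i(D_i)$ is carved out exactly by $D_i$ by construction, so the lower bound also holds under the ``exact'' reading of $N_i$, and since $J_S$ is monotone under shrinking the $G_i$, the two readings give the same minimum.
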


The total effort required of the developer is the cost of writing the skeleton $S$, and of specifying the $\mathcal N_0(S)$ required tests.
What makes a good skeleton then? First, it should be short, stating only
the decisions that are needed to control the agent. Each of its components should be simple to
specify, with a narrow interface and a behavior that a few tests or a short
contract can pin down. Individually correct components should automatically lead to globally correct programs. For example, complete Hoare-style pre- and
post-conditions~\cite{hoare1969axiomatic} achieve $J_S=0$ (no compatibility tests are needed), since if each component satisfies its contract then the implementation is globally correct. This can come at the cost of a longer specification $S$. However, it might be still worth it: when $J_S=0$, components are independent so independent agents can implement them in parallel with no risk of conflict. Moreover, each agent then faces a simpler task, which a smaller model may handle (\Cref{fig:cost_performance}).

\begin{figure}[t]
    \centering
    \includegraphics[width=0.8\textwidth]{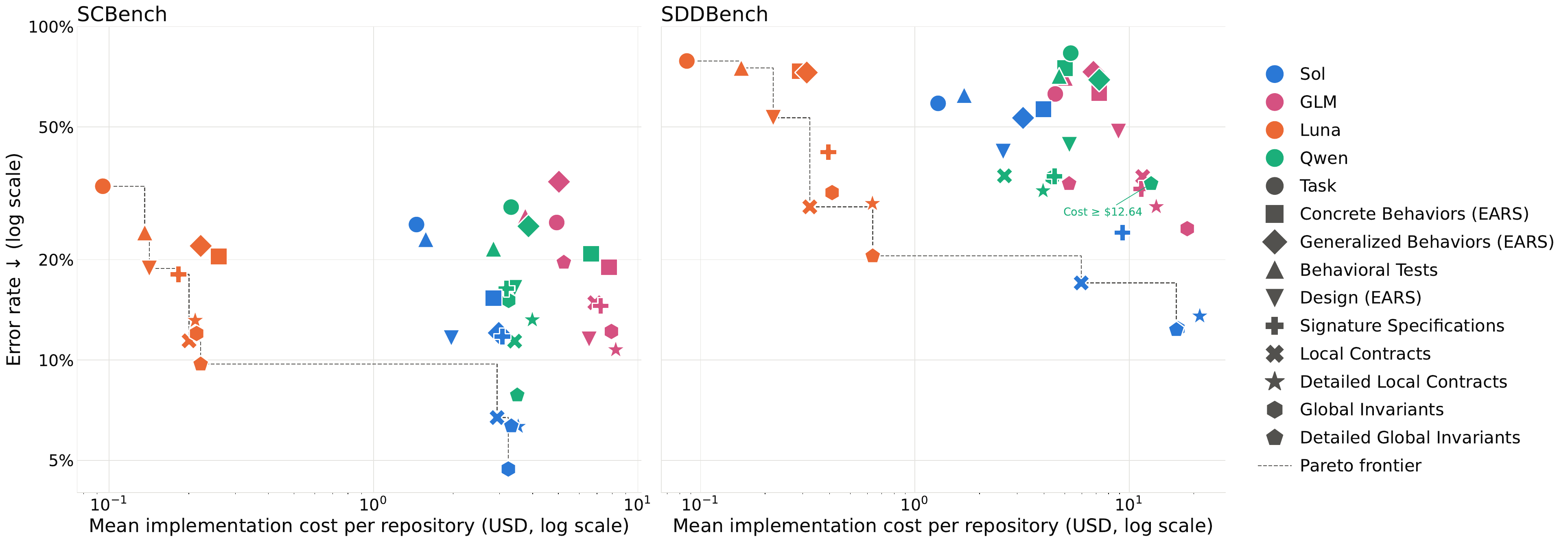}
    \caption{\textbf{Skeletons bridge the gap between large and small models}:
    Skeleton specifications significantly improve performance of smaller models, leading to accuracy similar to that of larger models at a lower cost.
    In particular, the entire cost--performance Pareto frontier (dashed) is formed mostly by small models with structural specifications.}
    \label{fig:cost_performance}
\end{figure}

\section{Closing the Loop: Checks and Refinement}
\label{sec:iterative_refinement}

\Cref{sec:method} describes an open-loop control setting: the specifications are provided as {\em initial conditions} to the agent. Initial conditions can at most condition, but not control, the execution. However, in practice LLMs are not even {\em conditioned} in the classical probabilistic sense, as much as they are {\em prompted}~\cite{falck2024martingale}. We will elaborate on this subtle but important difference, whose consequence is that it forces developers to iterate on a {\em slow loop}: they check each proposal and answer its failures.
Whether this loop works depends on whether the specification can be checked mechanically. In each case below, a skeleton turns a cost that multiplies
across components into one that adds. 

Let $E$ be a specification. So far, we have considered exact Bayesian conditioning $p(\cdot\,|\,E)$ on the specification which restricts the agent's
distribution to the programs $C_E=\{h:h\models E\}$ that satisfy it. Let $q_w(\cdot\mid T,E)$ be the
distribution actually produced by prompting the agent with the task $T$ and
the specification $E$, so that $p_0=q_w(\cdot\mid T)$.
Generally, the Bayesian posterior and the prompted model distribution will differ:
\[
p_0(\cdot\mid C_E) \neq q_w(\cdot\mid T,E).
\]
This means that a sample from $q_w(\cdot\mid T,E)$ might still violate $E$. Moreover, suppose the developer adds to an initial specification $E_0$ an additional specification $E_1$ to steer the agent toward the goal $G$ (such that $C_{E_1} \subseteq C_{E_0}$). The posterior entropy cannot increase in expectation over possible observations under the Bayesian predictive distribution, $\mathbb{E}_{E_1\mid E_0}[H(h\mid E_0,E_1)] \le H(h\mid E_0)$. Prompting with additional specifications, on the other hand, has no such guarantee. Indeed, in our experiments we show that overly detailed contracts can actually lower the pass rate (\Cref{fig:pass_rate}).

\textbf{Formal specifications.} Nonetheless, when the specification $E$ can be checked formally, the developer can close the loop by
rejection sampling, discarding proposals that fail the check. This ensures eventual convergence as more specifications are added, at the cost of additional compute to resample rejected programs. Local formal contracts can exponentially decrease the cost of rejection sampling as the following proposition shows. Let the
skeleton have components $i=1,\ldots,k$, let $G_i$ be the implementations of
component $i$ that satisfy its contract, and assume $G=G_1\times\cdots\times G_k$.
Given the skeleton, each component can be generated independently $h_i\sim q_i(\cdot\,|\,E_i)$, and we write $\alpha_i=q_i(G_i)$.
Let $q=q_1\times\cdots\times q_k$ be the resulting distribution over programs.

\begin{proposition}[Checking components separately]
\label{prop:component_rejection}
Resampling the whole program until every check passes takes
$\prod_i1/\alpha_i$ proposals in expectation, whereas resampling each
component until its own check passes takes $\sum_i1/\alpha_i$ component
proposals. Both return programs distributed as $q(\cdot\mid G)$.
\end{proposition}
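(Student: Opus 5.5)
The plan has three steps: cost out the two samplers as geometric waiting times, then check that each returns $q(\cdot\mid G)$.

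\textbf{Whole-program resampling: expected cost.} Each proposal $h=(h_1,\ldots,h_k)\sim q$ is drawn independently. It passes every check exactly when $h\in G=G_1\times\cdots\times G_k$. Because $q=q_1\times\cdots\times q_k$ is a product measure, the acceptance probability factorizes as $q(G)=\prod_i q_i(G_i)=\prod_i\alpha_i$. The number of proposals until the first acceptance is therefore geometric with success probability $\prod_i\alpha_i$. Its expectation is $\prod_i 1/\alpha_i$. We assume every $\alpha_i>0$; otherwise both procedures fail to terminate and the statement is vacuous.

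\textbf{Component-wise resampling: expected cost.} For each $i$, we draw $h_i\sim q_i(\cdot\mid E_i)$ repeatedly until $h_i\in G_i$. Each loop uses a geometric number of component proposals with mean $1/\alpha_i$. By linearity of expectation, which needs no independence across loops, the total expected number of component proposals is $\sum_i 1/\alpha_i$.

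\textbf{Both samplers output $q(\cdot\mid G)$.} This step needs a little care. For rejection sampling in general, the accepted draw is distributed as the proposal conditioned on acceptance: the probability that the first accepted draw equals $h$ is $\sum_{t\ge1}(1-q(G))^{t-1}q(h)\mathbf 1[h\in G]=q(h\mid G)$. Applied to the whole-program loop, this gives output law $q(\cdot\mid G)$. Applied to each component loop, it gives output law $q_i(\cdot\mid G_i)$. The $k$ loops use independent randomness, so the joint output is $\prod_i q_i(\cdot\mid G_i)$. It remains to show this equals $q(\cdot\mid G)$. For $h\in G$,
\[
q(h\mid G)=\frac{\prod_i q_i(h_i)}{\prod_i q_i(G_i)}=\prod_i q_i(h_i\mid G_i),
\]
and both sides vanish for $h\notin G$.

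\textbf{Where the argument depends on the assumptions.} The main obstacle is not computational. It is making sure the two hypotheses are used exactly where needed: the product structure of $q$ and the rectangular goal $G=\prod_i G_i$. Both enter in the factorization of the acceptance probability and in the identity $\prod_i q_i(\cdot\mid G_i)=q(\cdot\mid G)$. If the goal were not a product set, the second claim would fail, because componentwise acceptance no longer matches global acceptance. I would add a remark to that effect, connecting it to the compatibility term $J_S$ in \Cref{thm:blue_cost_decomposition}.
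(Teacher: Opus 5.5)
Your proof is correct and follows the paper's own argument. Both use geometric waiting times with success probabilities $\prod_i\alpha_i$ and $\alpha_i$, the fact that rejection sampling returns the proposal conditioned on acceptance, and independence together with the product goal $G=\prod_i G_i$ to get $\prod_i q_i(\cdot\mid G_i)=q(\cdot\mid G)$; you simply write out explicitly the rejection-sampling and factorization identities that the paper states without derivation.
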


\paragraph{Informal contracts.}
In our experiments we use informal contracts stated in the docstrings of the skeleton, checked by the agent itself
or by a model used as a judge. Suppose the judge accepts
every implementation consistent with its informal contract and let $\varepsilon_i$ be the
probability that it also accepts an implementation not in $G_i$.
This false-accept rate has two sources, $\varepsilon_i=\beta_i+(1-\beta_i)e_i$:
the fraction $\beta_i$ of violations that the informal contract, read literally,
still allows, and the rate $e_i$ at which the judge accepts the others.

\begin{proposition}[What an informal check guarantees]
\label{prop:judge}
Under the assumptions of \Cref{prop:component_rejection}:
\begin{enumerate}
\item[(i)] the probability that a component accepted by the judge actually satisfies the contract is\\
$\frac{\alpha_i}{\alpha_i+(1-\alpha_i)\varepsilon_i}$;
\item[(ii)] resampling each component until its check accepts yields a
program that satisfies every contract with probability at least
$1-\sum_i\varepsilon_i(1-\alpha_i)/\alpha_i$;
the probability is at most $\prod_i\alpha_i/\varepsilon$ for
whole-program rejection sampling with false-accept probability
$\varepsilon>0$.
\end{enumerate}
\end{proposition}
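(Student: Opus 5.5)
Part (i) follows from Bayes' rule applied to a single component proposal $h_i\sim q_i(\cdot\mid E_i)$. By assumption, the judge accepts every implementation in $G_i$, so $P(\text{accept}\mid h_i\in G_i)=1$. It accepts an implementation outside $G_i$ with probability $\varepsilon_i$. Hence $P(\text{accept})=\alpha_i+(1-\alpha_i)\varepsilon_i$, and the posterior probability that an accepted component lies in $G_i$ is $\alpha_i/(\alpha_i+(1-\alpha_i)\varepsilon_i)$. Rejection sampling with i.i.d.\ proposals returns a draw from the proposal distribution conditioned on acceptance, so this expression is also the probability that the component returned by per-component resampling is correct.

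For the first half of (ii), I write the failure probability of component $i$ as
\[
1-\frac{\alpha_i}{\alpha_i+(1-\alpha_i)\varepsilon_i}=\frac{(1-\alpha_i)\varepsilon_i}{\alpha_i+(1-\alpha_i)\varepsilon_i}\le\frac{\varepsilon_i(1-\alpha_i)}{\alpha_i},
\]
where the inequality drops the nonnegative term $(1-\alpha_i)\varepsilon_i$ from the denominator. Under the product structure of \Cref{prop:component_rejection}, the program is correct exactly when every component is, because $G=G_1\times\cdots\times G_k$. A union bound over the $k$ components then gives a success probability of at least $1-\sum_i\varepsilon_i(1-\alpha_i)/\alpha_i$. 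Independence across components is not needed for this step; the union bound suffices.

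For the whole-program claim, I apply the same Bayes computation to the joint proposal $q=q_1\times\cdots\times q_k$. Its prior success probability is $q(G)=\prod_i\alpha_i$, and the judge has false-accept probability $\varepsilon$ on programs outside $G$ and accepts every program in $G$. The accepted program is then correct with probability $\prod_i\alpha_i/(\prod_i\alpha_i+(1-\prod_i\alpha_i)\varepsilon)$. Dropping the first denominator term requires care, since $\prod_i\alpha_i$ is not $\ge\varepsilon$ in general. I will first show that the expression is at most $\min\{1,\prod_i\alpha_i/((1-\prod_i\alpha_i)\varepsilon)\}$. Getting the stated $\prod_i\alpha_i/\varepsilon$ would require a factor $1-\prod_i\alpha_i$ that I do not see how to recover; without it the clean bound is $\prod_i\alpha_i/\varepsilon$ up to that factor, which is negligible in the interesting regime where $\prod_i\alpha_i$ is small.

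I expect this last point to be the main obstacle. The argument is otherwise routine, but the whole-program bound as stated needs either the factor above or an extra argument. My first attempt will be to note that the posterior is increasing in the prior and to bound it by the value obtained when the accept rate is at least $\varepsilon$, which holds because $P(\text{accept})\ge\varepsilon$ whenever $\varepsilon\le1$. Then
\[
P(\text{correct}\mid\text{accept})=\frac{P(\text{correct})}{P(\text{accept})}\le\frac{\prod_i\alpha_i}{\varepsilon},
\]
since $P(\text{accept})=\prod_i\alpha_i+(1-\prod_i\alpha_i)\varepsilon\ge\varepsilon$, as it is a convex combination of $1$ and $\varepsilon$. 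This settles the stated bound directly, and I will use it. The resulting contrast is that per-component checking degrades additively in the $\varepsilon_i$, whereas whole-program checking is throttled by the multiplicative prior $\prod_i\alpha_i$.
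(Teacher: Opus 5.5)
Your proof is correct. Part (i) and the whole-program bound follow the paper's argument. For (i) you use Bayes' rule with acceptance probability $\alpha_i+(1-\alpha_i)\varepsilon_i$. For the whole-program bound you use $P(\mathrm{accept})=\alpha+(1-\alpha)\varepsilon\ge\varepsilon$, a convex combination of $1$ and $\varepsilon$. Your earlier worry about a missing factor $1-\prod_i\alpha_i$ came from discarding $\prod_i\alpha_i$ from the denominator instead of bounding the whole denominator below by $\varepsilon$. That false start can simply be cut.

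You differ from the paper on the per-component lower bound in (ii). The paper uses independence of the accepted components to write the exact joint success probability $\prod_i 1/(1+x_i)$, with $x_i=\varepsilon_i(1-\alpha_i)/\alpha_i$. It then relaxes this through $1/(1+x)\ge\max\{0,1-x\}$ and $\prod_i\max\{0,1-x_i\}\ge 1-\sum_i x_i$. You instead bound each marginal failure probability by $x_i$ and apply a union bound.

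Your route is shorter and uses only the marginal guarantee from (i) in each slot. It would therefore still hold with correlated proposals or a checker whose randomness is shared across components. The paper's route needs independence, but in exchange it gives the exact correctness probability, which is sharper than $1-\sum_i x_i$ before the final relaxation.
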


In particular, a check makes an accepted component likely correct only when  it errs much less often than the agent fails on that unit, $\varepsilon_i\ll\alpha_i$. The
 term $\beta_i$ is a floor that no judge removes: a violation that
the informal contract allows passes even a perfect judge. Making informal contracts more precise can drive $\beta_i$ to zero, leaving $\varepsilon_i$ equal to the judge's own error $e_i$; when that error also vanishes, we recover \Cref{prop:component_rejection}.

\section{Empirical Results}
\label{sec:experiments}

In this section, we empirically study how different kinds of specifications -- both behavioral and structural -- affect pass rate, code quality and implementation cost.

\paragraph{Types of specifications.} We test different specifications that describe \textbf{end-to-end behavior} \textit{but not the structure} of the code base: (a) \textit{Task Only}: a \textit{vibe-coding} style prompt describing the task; (b) \textit{Concrete Behaviors}: natural language prescription of observable behavior in EARS format~\cite{mavin2009ears}; (c) \textit{Behavioral Tests}: test suite on the public APIs. We then consider specifications describing the desired \textbf{structure} of the code base: (d) \textit{Design:} a prose document describes the system architecture~\cite{clements2010architectures}; (e) \textit{Function Stubs:} a code-like specification containing modules and function signatures with docstrings~\cite{vanrossum2014pep484}; (f) \textit{Local Contracts:} same as before, but adding local contracts, such as pre- and post-conditions to the docstring of each function~\cite{meyer1992designbycontract};
(g) \textit{Global Invariants:} same as before, adding properties about the composition of functions that should hold at the code-base level.

\paragraph{Models and Benchmarks.} We first consider greenfield applications, where the implementation is generated from scratch, using
programs drawn from SlopCodeBench (SCBench)~\cite{orlanski2026slopcodebench}, SDDBench~(\Cref{app:sddbench}), and
ProgramBench~\cite{yang2026programbench}. We evaluate GPT-5.6 Sol and
GPT-5.6 Luna~\cite{openai2026gpt56}, Qwen3.8-27B~\cite{qwen2026qwen38}, and GLM 5.2~\cite{zai2026glm52}. In each experiment, we take a reference program from one of these datasets and generate different types of
specifications from it. The implementation agent then receives the task
description and one specification and produces a reconstructed program without access to the reference implementation.
We run all agents in our experiments with the open-source AI Functions
framework~\cite{strandsaifunctions}\footnote{\url{https://github.com/strands-labs/ai-functions}}.

\paragraph{Metrics.} We measure agreement with evaluation tests and properties of the produced implementation.
\textbf{Test pass rate} is the fraction of evaluation tests that the implementation passes. It measures
agreement on the tested cases; passing a finite suite does not by itself establish membership in $\mathcal G$.
Code erosion and verbosity characterize implementation quality separately from behavioral correctness.
\textbf{Code erosion} \cite{orlanski2026slopcodebench} is the share of
cyclomatic-complexity~\cite{mccabe1976complexity} mass held by functions with cyclomatic complexity greater than 10, where each function's
mass is its cyclomatic complexity multiplied by the square root of its source-line count. It measures how much
complexity is concentrated in difficult-to-maintain code. \textbf{Verbosity} is the fraction of source lines flagged as
duplicated code or common code-slop patterns and measures unnecessary or repetitive implementation content.
\textbf{Implementation cost} is the estimated monetary cost of the implementation model's input, output, and cache-read
tokens, calculated using the corresponding per-token prices.

\textbf{Generation of specifications.} Each experiment starts from a ground-truth code base $h^\ast\in\mathcal G$. From this, based on the experimental setting $c$, we 
produce a specification $S_c$. The ground-truth code base is then kept hidden, and the
implementation agent reconstructs the program from the task and specification
alone, $h \sim q_\theta(\cdot\mid T,S_c)$. In the structural
conditions the specification is a skeleton; in the behavioral conditions it is a
set of observations derived from the reference; the task-only baseline
supplies no specification and samples from $p_0$. Pass rate therefore reflects both
the information each specification carries and the agent's ability to use that
information.

\setcounter{topnumber}{1}
\setcounter{totalnumber}{1}

\begin{figure}[t]
    \centering
    \includegraphics[width=0.48\textwidth]{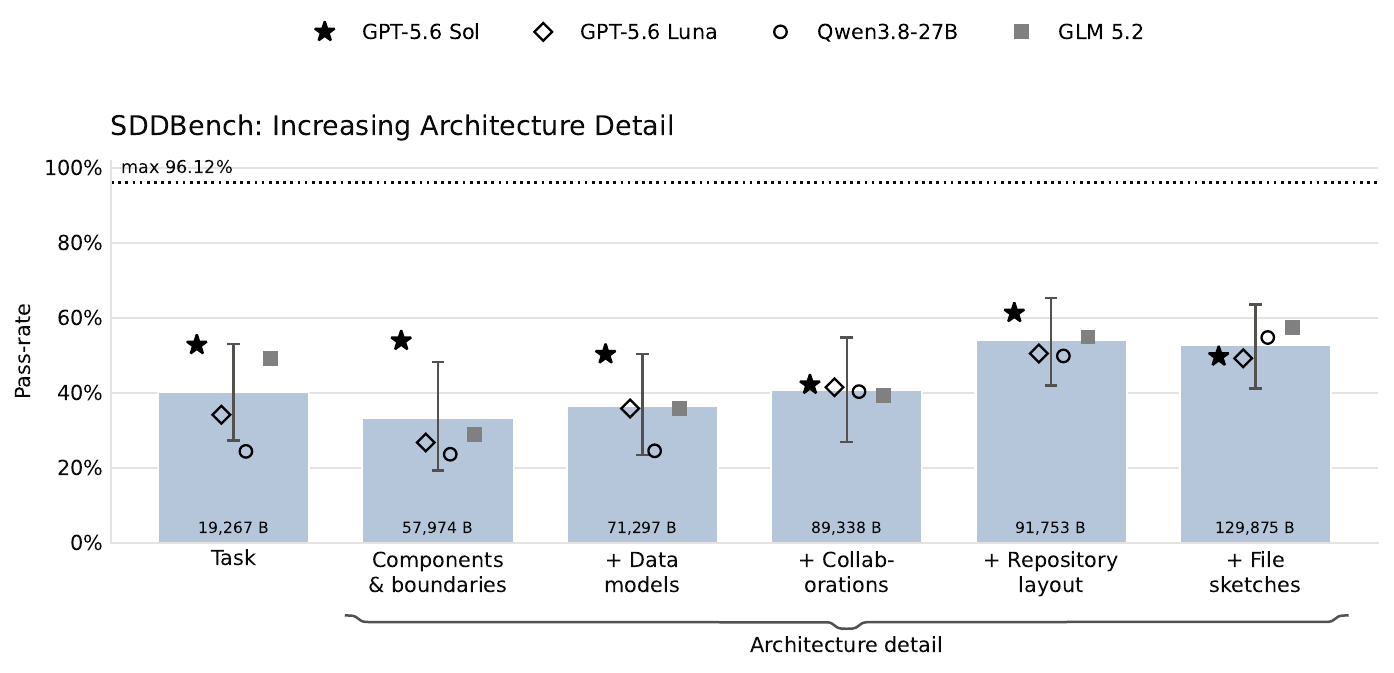}
    \hspace{2em}
    \includegraphics[width=0.33\textwidth]{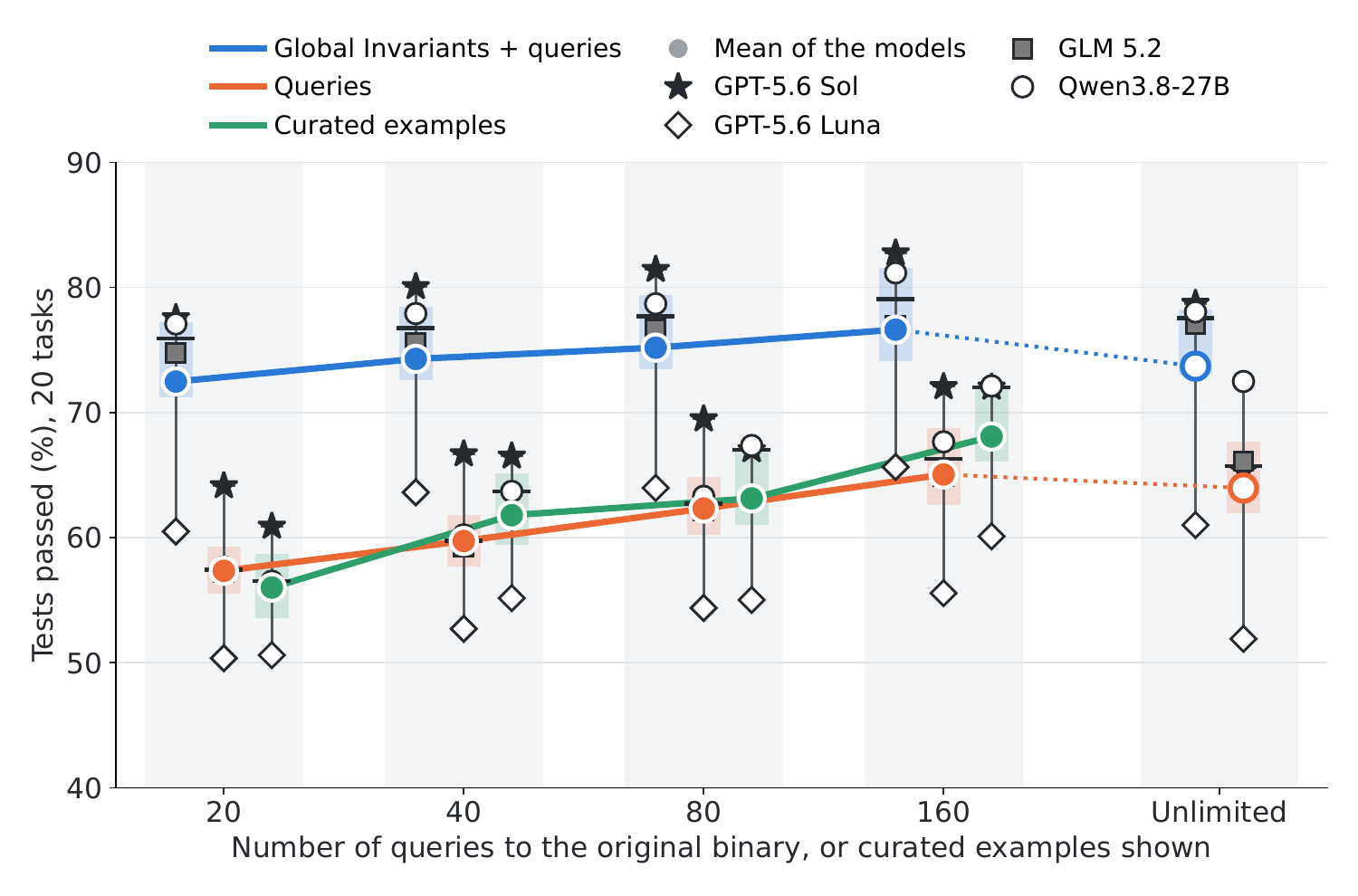}
    \caption{\textbf{(Left) More architecture detail does not match skeletons.}
    Adding architecture sections to the markdown file slightly raises mean pass rate, but the resulting pass rate remains below that achieved with skeleton-based specifications.
    \textbf{(Right) Accuracy improvement with additional behavioral observations.}
    On ProgramBench, we show the average test pass rate increases as we increase the number of observations provided to the model (unlimited lets the agent decide), but in all cases end-to-end behavioral specifications perform worse than skeletons with just 20 queries.
    }
    \label{fig:architecture_scaling}
\end{figure}

\subsection{Results}
\label{sec:results}

\paragraph{Behavioral testing alone does not identify valid programs}
First, in \Cref{fig:oracle_tests_scaling} we evaluate on ProgramBench how many behavioral specifications are needed to identify a program in the goal set. ProgramBench provides an oracle binary implementation for each task, which is in principle a full behavioral specification. However, even with access to it models only pass 61.5\% of the hidden tests. Still, this may be due to the agent not being able to make informative enough queries. To simulate a developer providing an informative set of observations, we provide the agent with an increasing number of the hidden tests. Surprisingly, while providing additional tests does help, on average no number of tests provided is enough to let the agent identify the correct program. This supports the theoretical observation that, with no structure specified, exponentially many tests might be required despite the strong prior of frontier coding agents.

\paragraph{Structural specifications outperform behavioral.} Next, we compare the performance of different types of specifications from behavioral descriptions and tests with accuracy from architecture documents and code-like skeletons. In \Cref{fig:pass_rate} we see that, no matter the format, behavioral specifications underperform compared to structural specifications. Of the latter, code-like specifications consisting of function stubs perform better than architectural design files in natural language. Adding additional structural constraints, such as local and global invariants, further improves the final pass rate. 
Across SCBench~\cite{orlanski2026slopcodebench} and SDDBench, structural specifications raise mean pass rate from 71.7\% to 89.2\% and from 29\% to 74.2\%, respectively, whereas behavioral specifications improve pass rate by at most 9.4 percentage points. Adding function stubs alone raises SDDBench mean pass rate from 53\% to 66.4\%. Together with Programbench, across 4 models the relative error rates reduces by 48.4\% from 40.7\% to 21.0\% (see \Cref{fig:pass_rate,fig:programbench_spec_types}) compared to the best end-to-end behavioral specifications.

\begin{figure}[t]
    \centering
    \includegraphics[width=\textwidth]{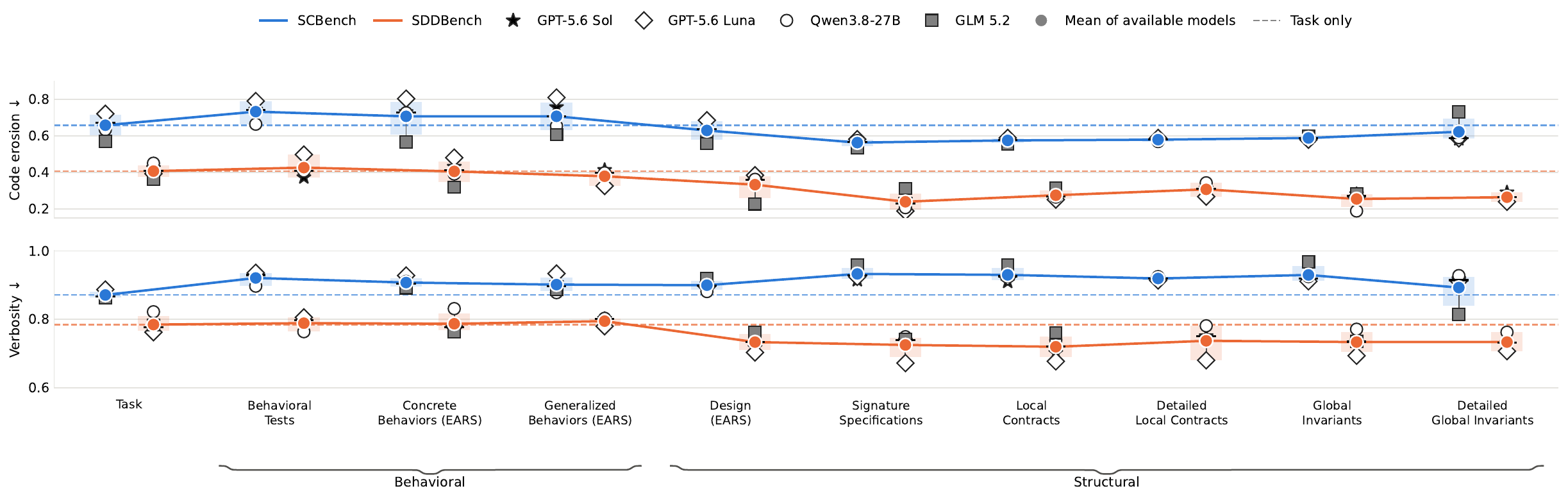}
    \caption{\textbf{Structural specifications produce less eroded code than behavioral specifications}:
    Across all benchmarks and models, structural specifications reduce code erosion by up to 24.7\%, compared to the task-only prompt. Meanwhile, behavioral specifications increase it. Verbosity varies little across conditions.}
    \label{fig:erosion}
\end{figure}

\begin{figure}[t]
    \centering
    \includegraphics[width=\textwidth]{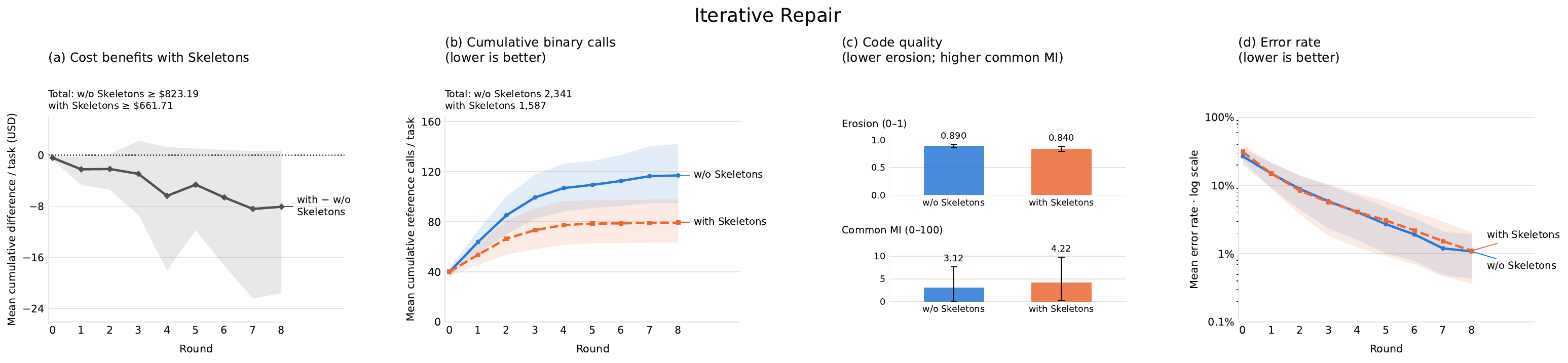}
    \caption{\textbf{Skeletons reduce query effort and cost during iterative repair}:
    For iterative repair tasks, using skeletons reduces the total cost of implementation, and also the number of user queries. We model user queries with the number of calls to the program binary. Additionally, final code has lower erosion and higher maintainability.}
    \label{fig:brownfield_skeleton_coding}
\end{figure}

\paragraph{Structure helps even with perfect behavioral specifications.} One may wonder whether providing enough behavioral specifications can eventually match structural specification. Theoretically, this should be the case, but it does not happen in practice. In \Cref{fig:oracle_tests_scaling}~(Right), we show on ProgramBench that, even when the agent can query the oracle (a full behavioral specification), adding a skeleton with local contracts increases mean pass rate by 10.9\% relative (from 64.0\% to 70.9\%) to the same setup without a skeleton. The best-performing structural condition, a skeleton with global invariants, reduces mean test error across the four models by 27.0\% relative (from 36.0\% to 26.3\%) to that baseline (\Cref{fig:programbench_spec_types}).
To remove the influence of the agent's choice in querying the oracle \Cref{fig:architecture_scaling}~(Right), we provide the increasing number of curated observations up to 160. For any number of curated observations, the performance is strictly worse than that of a skeleton with just 20 observations.

\paragraph{Structural specifications improve smaller models.}
We compare pass rate and implementation cost across model sizes in \Cref{fig:cost_performance}, and test whether structural specifications narrow the gap between smaller and larger models. With detailed global invariants, GPT-5.6 Luna~\cite{openai2026gpt56} improves from 67\% to 90\% on SCBench~\cite{orlanski2026slopcodebench} and from 21.1\% to 79.5\% on SDDBench. These results come within 5.0--8.2 percentage points of the best GPT-5.6 Sol~\cite{openai2026gpt56} results at approximately 15$\times$ and 26$\times$ lower implementation cost, respectively.

\paragraph{More architectural specifications.} Within structural specifications, we have seen in \Cref{fig:pass_rate} that code-like specifications outperform prose-like specifications such as markdown files describing the architecture. This motivates our choice to use code-like specifications for skeleton coding. We can ask if a more elaborate architecture file can bridge the gap with code-like signature specifications. This is not the case: in \Cref{fig:architecture_scaling} we show that adding progressively more architectural information does raise  mean pass rate ($61\% \to 71\%$ on SCBench~\cite{orlanski2026slopcodebench}, $40.1\% \to 52.8\%$ on
SDDBench). However, while it expands the input by roughly 4--7$\times$, every architecture specification still remains below the best code-like specification on the same repositories.

\paragraph{Model fatigue and non-Bayesian models.} 
One may expect that adding more specifications will always result in more faithful implementations. This is, however, not the case. In \Cref{fig:pass_rate}, we take the best-performing skeleton specification, and we add additional local and global contracts that, while correct, are overly detailed. This reduces performance, especially on smaller models. While the Bayesian posterior on additional specifications should only improve pass-rate, prompting the model with specifications need not approximate the Bayesian posterior. This is increasingly the case when more specifications are provided, making actual Bayesian conditioning more complex.

\paragraph{Effect of skeletons on code quality}
An implementation can pass its tests while containing complex or repetitive code. We compare erosion and verbosity across specification types to test whether specifying function boundaries and contracts improves code quality alongside accuracy.
In \Cref{fig:erosion}, we show that signature specifications reduce code erosion from 0.66 to 0.56 on SCBench and from 0.41 to 0.24 on SDDBench relative to the task-only condition, while behavioral specifications increase erosion to as much as 0.73 and 0.43. Verbosity varies less across conditions, but is also lowest under structural specifications on SDDBench.

\paragraph{Skeletons reduce query effort during iterative repair.}
In \Cref{fig:brownfield_skeleton_coding}, we compare iterative fixing of a code-base both without using skeletons and asking the agents to maintain a skeleton. In both cases, we start with the task and reference-binary access; between rounds, we disclose new failing tests and carry the code and, where present, the skeleton into a fresh conversation. By round~8, the skeleton workflow reaches similar error rates with 32\% fewer reference-binary calls (a proxy for behavioral clarification effort), lower mean erosion, and higher mean maintainability.

\section{Related Work}
\label{sec:related}

\paragraph{Contracts and formal specifications.}
Skeletons inherit their basic vocabulary from program logics and
design-by-contract methods, which make interfaces, preconditions,
post-conditions, and invariants explicit rather than leaving them implicit in an
implementation \citep{hoare1969axiomatic,meyer1992designbycontract}. Verification
languages such as Dafny make these contracts machine-checkable and use them to
generate proof obligations for an implementation \citep{leino2010dafny}. The
skeletons we evaluate range from typed stubs \citep{vanrossum2014pep484} to stubs with local contracts and
global invariants written in EARS \citep{mavin2009ears}, and are not machine-checked. We draw on the
contract tradition to separate interfaces, contracts local to one function~\citep{endres2024nl2postcond}, and
invariants shared across components, and measure how much each helps a coding agent.

\paragraph{Intermediate artifacts for code generation.}
Several code-generation systems ask a model to construct an intermediate
artifact before writing or modifying code~\citep{zelikman2023parsel,hong2024metagpt}. Self-planning first elicits a
sequence of implementation steps and then conditions generation on that plan
\citep{jiang2024selfplanning}; CodePlan combines a repository-wide edit plan
with dependency and change-impact analysis \citep{bairi2024codeplan}.
RPG replaces free-form natural-language plans with a repository planning graph
that encodes capabilities, file structure, data flows, and functions, from which
ZeroRepo generates whole repositories \citep{luo2025rpg}.
SpecRover instead infers intended behavior while searching a repository and
uses a reviewer to vet the resulting patch \citep{ruan2025specrover}. Tests can
also serve as intermediate behavioral evidence~\citep{lahiri2022ticoder,fakhoury2024ticoder}: CodeT generates tests to rank
candidate implementations \citep{chen2023codet}. These interventions use
intermediate artifacts, but plans,
specifications, and tests are usually evaluated as complete workflows. We
instead hold the implementation task and model fixed while organizing
skeletons by the contract content they expose, separating that content from
artifact format and raw length.

\paragraph{Specification-driven development.}
Recent work treats specifications as artifacts that guide generation and persist
across revisions. SMART regenerates an ML performance-modeling library from
modular natural-language design documents with worked examples
\citep{kushnir2026designdocs}. He et al.\ compare specification forms and input
granularity, finding benefits from structured pseudocode and module-level inputs
\citep{he2026vibetospec}. SpecFirst separates behavioral specification elicitation
from implementation: one agent combines documentation with observations from querying an execute-only binary into
a specification, which another agent uses for code synthesis
\citep{chen2026specfirst}. We complement these approaches by varying the
structural information supplied in skeletons and measuring reconstruction
accuracy, code quality, and query effort.

\paragraph{Constrained and formally verified generation.}
A complementary line of work enforces stronger constraints during generation~\citep{sun2024clover}.
Program sketching is a direct precedent for skeletons: the programmer writes a
partial program that fixes the structure of a solution, and a synthesizer fills
in the missing details so that the result agrees with a reference
implementation or passes supplied tests \citep{solarlezama2009sketching}.
Synchromesh restricts decoding to programs admitted by a target language and
its semantic constraints \citep{poesia2022synchromesh}. AlphaVerus
\citep{aggarwal2024alphaverus} and AutoVerus \citep{yang2025autoverus} use verifier
feedback, search, and specialized repair procedures to generate verified code and
correctness proofs in Verus, a verifier for Rust \citep{lattuada2023verus}, while the
Vericoding benchmark \citep{bursuc2025vericoding} evaluates generation from formal
specifications across Dafny \citep{leino2010dafny}, Verus, and Lean
\citep{demoura2021lean4}. These systems primarily address the downstream problem of
finding code and proof material for a supplied formal specification.
Our central question is upstream: what information the implementation target
should contain, who must author it, and how much implementation uncertainty it
removes. The current experiments use executable acceptance checks rather than
machine-checked proofs; evaluating the formal checks discussed in
\Cref{sec:iterative_refinement} remains a direction for future experiments.

\paragraph{Greenfield and repository-level evaluation.}
Function-level benchmarks such as HumanEval established test pass rate as a
standard outcome for code generation from a short natural-language description
\citep{chen2021codex}. SlopCodeBench (SCBench) extends evaluation to
long-horizon, iterative greenfield development, where generated decisions
accumulate as a project evolves \citep{orlanski2026slopcodebench}. We study
greenfield and brownfield tasks to expose different burdens: greenfield generation must preserve a
growing design, whereas brownfield modification must first recover relevant
behavior from pre-existing code. Beyond aggregate pass rate, we report outcomes
by specification type, skeleton coverage, model, query effort, and code quality.

\appendix

\section{Specification Types}
\label{sec:specification-types}

All experiments include the natural-language task description and differ in the additional specification provided
to the implementation agent. We consider the following specification types.\footnote{EARS stands for Easy Approach to Requirements
Syntax~\cite{mavin2009ears}; Python signature specifications use stub-style interfaces as described in
PEP~484~\cite{vanrossum2014pep484}; local and global contract specifications build on Design by
Contract~\cite{meyer1992designbycontract}; and architecture specifications use established software-architecture
documentation concepts~\cite{clements2010architectures}.} (i)
\textbf{Task only:} the implementation agent receives only the natural-language task description and no
reference-derived artifact; (ii) \textbf{Concrete Behaviors (EARS):} individual end-to-end behavioral instances
are written as event, condition, and response requirements; (iii) \textbf{Generalized Behaviors (EARS):}
behavioral requirements are expressed as rules over classes of inputs rather than individual examples; (iv)
\textbf{Behavioral Tests:} generated executable tests encode end-to-end input--output behaviors; (v)
\textbf{Design (EARS):} a prose design document describes the system architecture together with EARS behavioral
requirements; (vi) \textbf{Signature Specifications:} the repository structure, modules, types, and callable
signatures are preserved while behavioral contracts and implementation bodies are omitted; (vii) \textbf{Local
Contracts:} signatures are augmented with behavioral contracts attached to individual functions and methods;
(viii) \textbf{Global Invariants:} local contracts are augmented with agreements
that must hold across multiple functions, methods, or modules; (ix)
\textbf{Architecture Specifications:} a prose architecture document progressively adds components and
boundaries, data models and state, collaborations, repository layout, and per-file sketches; and (x)
\textbf{Native Skeleton Specifications:} for ProgramBench~\cite{yang2026programbench}, the source-language file structure, declarations,
types, callable signatures, and EARS contracts are preserved while implementation bodies are omitted.

\subsection{Specification templates}
\label{sec:spec-examples}

To make the specification types concrete, this section shows a template for each artifact, instantiated on a
small synthetic component written for illustration. The component is not drawn from any benchmark repository.
It is a hypothetical library-lending module, \texttt{booklend}, with one class, \texttt{Library}, that tracks
how many copies of each title are on the shelf and on loan. Every template concerns the same design decision: a successful
loan reduces the number of available copies, whereas an availability query leaves it unchanged. Angle brackets
mark the parts that differ between repositories.

\paragraph{(i) Task only.}
A natural-language description of the package, its features, and its public API, written as a user-facing
README. It describes what the library does but not how it is organized.
\needspace{4\baselineskip}
\begin{lstlisting}[language=ears]
# <package name>

## Problem and Purpose
<What the package is for and which problems it solves.>

## Features
- `Library` records the number of copies of each title.
- `lend(title, copies)` lends copies of a title to a reader.
- `available(title)` reports how many copies can still be lent.
\end{lstlisting}

\paragraph{(ii) Concrete Behaviors (EARS).}
A numbered catalogue of end-to-end observations, each a single input and its expected output, written in
EARS syntax (WHEN/IF \ldots\ THEN THE SYSTEM SHALL \ldots). Behavioral scaling experiments draw nested 25\%,
50\%, and 100\% subsets of this catalogue.
\needspace{4\baselineskip}
\begin{lstlisting}[language=ears]
# Behaviors: <package name>

<N> behaviors in <K> capability groups.

1. `result` - WHEN Library holds 5 copies of "poems" and lend("poems", 2) is
   called THEN THE SYSTEM SHALL return True.
2. `state` - WHEN Library holds 5 copies of "poems", lend("poems", 2) is
   called, and available("poems") is called THEN THE SYSTEM SHALL return 3.
3. `state` - WHEN Library holds 5 copies of "poems" and available("poems") is
   called twice THEN THE SYSTEM SHALL return 5 both times.
\end{lstlisting}

\paragraph{(iii) Generalized Behaviors (EARS).}
The same catalogue size, but each entry quantifies over a class of inputs rather than naming one. Note the
WHERE clause and the absence of concrete values.
\needspace{4\baselineskip}
\begin{lstlisting}[language=ears]
1. `state` - WHERE a title has q available copies, WHEN lend is called
   with k <= q copies THEN THE SYSTEM SHALL reduce the available
   copies of that title to q - k.
2. `state` - WHEN available is called for any title THEN THE SYSTEM SHALL
   leave the available copies of every title unchanged.
\end{lstlisting}

\paragraph{(iv) Behavioral Tests.}
Executable pytest functions against the public API only. Each test is the executable form of one concrete
behavior.
\needspace{4\baselineskip}
\begin{lstlisting}[language=Python]
def test_lend_reduces_available():
    library = Library({"poems": 5})
    assert library.lend("poems", 2)
    assert library.available("poems") == 3

def test_available_does_not_change_copies():
    library = Library({"poems": 5})
    assert library.available("poems") == 5
    assert library.available("poems") == 5
\end{lstlisting}

\paragraph{(v) Design (EARS).}
A prose document in the requirements/design format: user stories with EARS acceptance criteria, followed by a
design section describing components, behavioral rules, data models, and error handling in paragraphs.
\needspace{4\baselineskip}
\begin{lstlisting}[language=ears]
### Requirement <n>: <capability>
**User Story:** As a <role>, I want <capability>, so that <benefit>.

#### Acceptance Criteria
1. WHEN a loan succeeds THEN THE SYSTEM SHALL reduce the available
   copies by the number lent.
2. WHEN availability is queried THEN THE SYSTEM SHALL leave all copy counts
   unchanged.

## Design
### Components and Interfaces
- **<component>:** <responsibility, the state it owns, and the components
  it may call.>
\end{lstlisting}

\paragraph{(vi) Signature Specifications.}
The repository as a tree of \texttt{.pyi} stub files: modules, classes, attributes, and callable signatures
with one-paragraph docstrings. Bodies are elided. This is the first level that fixes the internal
representation.
\needspace{4\baselineskip}
\begin{lstlisting}[language=pyi]
class Library:
    """Track available and lent copies per title."""

    _available: dict[str, int]
    _on_loan: dict[str, int]

    def lend(self, title: str, copies: int) -> bool:
        """Lend copies of a title if enough are available."""
        ...

    def available(self, title: str) -> int:
        """Return the number of copies of a title that can still be lent."""
        ...
\end{lstlisting}

\paragraph{(vii) Local Contracts.}
The same stubs, with each docstring extended by argument, return, and exception sections and an
\texttt{Ensures} block of EARS post-conditions scoped to that callable.
\needspace{4\baselineskip}
\begin{lstlisting}[language=pyi]
    def lend(self, title: str, copies: int) -> bool:
        """Lend copies of a title if enough are available.

        Returns:
            True if the loan succeeded, otherwise False.

        Raises:
            KeyError: If ``title`` is unknown.

        Ensures:
            - WHEN the call returns True THEN THE CALLABLE SHALL reduce
              ``_available[title]`` by ``copies``.
            - WHEN the call returns False THEN THE CALLABLE SHALL leave all
              copy counts unchanged.
        """
        ...
\end{lstlisting}

\paragraph{(viii) Global Invariants.}
Local contracts plus a separate \texttt{INVARIANTS.md} file. Each invariant is an EARS statement that spans
several callables, followed by the list of callables that must agree with it; the invariant is also cited
from the docstrings of those callables.
\needspace{4\baselineskip}
\begin{lstlisting}[language=ears]
## I<n> -- Lent and available copies stay consistent
WHEN any public method returns THEN THE SYSTEM SHALL keep, for every title,
the sum of its available and lent copies equal to the number of copies it owns.

Participants: Library.lend, Library.checkin, Library.available
\end{lstlisting}

\needspace{12\baselineskip}
\paragraph{(ix) Architecture Specifications.}
We illustrate the five architecture variants in \Cref{fig:architecture_scaling} using \texttt{booklend}.
All variants accompany the task description. The first specifies component responsibilities and state
ownership. Each subsequent variant retains the preceding sections and adds a different kind of information:
the state representation and its invariants, the sequence of checks and updates, the repository layout,
and finally the interfaces and requirements of individual files. Thus, the first variant includes only
section~1 below, and the final variant includes all five sections.
\needspace{4\baselineskip}
\begin{lstlisting}[language=ears]
## 1. Components and Boundaries
- Library manages an in-memory inventory and exposes lend and available.
- Library owns all mutable lending state. Callers use its public methods
  rather than modifying copy counts directly.
- Lending may change state; availability queries only read it.

## 2. Data Models and State
- _available maps each title to its number of available copies.
- _on_loan maps each title to its number of lent copies.
- Both mappings contain the same known titles and nonnegative counts.
- Initially, _available contains the supplied inventory and _on_loan
  contains zero for every title.
- For each title, available plus lent copies equals its initial inventory.

## 3. Collaborations
- A caller submits a title and copy count to Library.lend.
- lend validates the title and positive copy count before changing state.
- If stock is insufficient, lend returns False without changing state.
- Otherwise, lend decreases _available and increases _on_loan by the
  requested count, then returns True.
- available validates the title and reads _available; it changes no state.

## 4. Repository Layout
- booklend/__init__.py: exports the public Library class.
- booklend/library.py: defines Library, initializes its state, and
  implements lending and availability queries.
- The package entry point imports Library from library.py; library.py
  does not depend on the package entry point.

## 5. File Sketches
- booklend/__init__.py re-exports Library without creating an inventory.
- booklend/library.py defines Library(inventory), lend(title, copies),
  and available(title), with the two state mappings described above.
- WHEN lend returns True THEN THE CALLABLE SHALL reduce _available[title]
  and increase _on_loan[title] by copies before returning.
- WHEN lend returns False THEN THE CALLABLE SHALL leave all counts unchanged.
- WHEN available returns THEN THE CALLABLE SHALL return _available[title]
  and leave both mappings unchanged.
- IF a title is unknown THEN THE CALLABLE SHALL raise KeyError before
  changing either mapping.
\end{lstlisting}

\paragraph{(x) Native Skeleton Specifications (ProgramBench).}
For ProgramBench tasks, the skeleton is written in the source language of the task. It keeps the module tree,
type declarations, trait or interface implementations, and function signatures with their documentation
comments, and replaces every body with a placeholder. Module-level EARS contracts are placed in a
specification file beside each directory. The template below is written in Rust.
\needspace{4\baselineskip}
\begin{lstlisting}[language=rustmini]
/// <What the type represents and the invariants it maintains.>
pub struct Library {
    available: HashMap<String, u32>,
}

impl Library {
    /// <What the function does, its arguments, and its result.>
    pub fn lend(&mut self, title: &str, copies: u32) -> bool {
        unimplemented!("specification stub")
    }
}
\end{lstlisting}

\section{SDDBench}
\label{app:sddbench}

SDDBench is a repository-level benchmark for evaluating
specification-driven software development with coding agents. We derive its tasks from
real Python libraries, with renamed packages and rephrased project
descriptions to reduce direct recognition of the original projects. In our
experiments, the agent reconstructs each library from a self-contained task
description and the specification supplied by the experimental condition,
without access to the reference implementation. Functional correctness is the
fraction of tests passed in the corresponding project's evaluation suite, which
is executed separately from the implementation. This setup isolates how the
structure and behavioral detail of a specification affect reconstruction
accuracy, implementation cost, and code quality. Since the underlying
projects are public, anonymization reduces explicit identity cues but does not completely
rule out training-data contamination.

\begin{center}
\small
\begin{tabular}{ll}
\hline
\textbf{Real name} & \textbf{Anonymized name} \\
\hline
Babel & LocaleFoundry \\
CacheControl & RequestsCachePro \\
cachetools & memokit \\
Click & CLIForge \\
icecream & debugwhisper \\
itsdangerous & cryptoseal \\
marshmallow & serializor \\
Mimesis & MirageMint \\
Parsel & DataHarvest \\
PySnooper & CodeTrail \\
statsmodels & StatWeaver \\
TinyDB & MicroVault \\
xarray & AxisAtlas \\
\hline
\end{tabular}
\end{center}

\needspace{9\baselineskip}
\section{ProgramBench and SCBench}
\label{app:programbench}

We evaluated 20 tasks from ProgramBench~\cite{yang2026programbench}:
\texttt{atlas}, \texttt{brotli}, \texttt{dropbear}, \texttt{dupl}, \texttt{fzf},
\texttt{go-mod-outdated}, \texttt{loop}, \texttt{luajit}, \texttt{ninja},
\texttt{nomino}, \texttt{revive}, \texttt{richgo}, \texttt{ripsecrets},
\texttt{rust-sloth}, \texttt{sd}, \texttt{solar}, \texttt{tty-clock},
\texttt{typst}, \texttt{xq}, and \texttt{xz}.
For our main SCBench~\cite{orlanski2026slopcodebench} experiments, we evaluated
reconstructions of each problem's final repository state.

\section{Experimental Details}
\label{app:experimental_details}

Here we describe the details of the experiments in \Cref{sec:results}.
We ran all experiments with the AI Functions
framework (\url{https://github.com/strands-labs/ai-functions}).
We used the models, specification formats, and metrics described in
\Cref{sec:experiments,sec:specification-types}.
In \Cref{fig:oracle_tests_scaling_a}, we evaluated four models on 20
ProgramBench tasks~\cite{yang2026programbench}~(\Cref{app:programbench}).
For each task, we held out approximately 25\% of the eligible reference-passing
tests for evaluation and used the remaining tests as the available pool.
We gave each agent the task, reference-executable access, and nested
0/33.3/66.7/100\% subsets of that pool, drawn with a fixed seed.
We scored all conditions on the same held-out set. Thus, 100\% on the x-axis
means that all tests in the available pool were provided, while the evaluation
tests remained withheld.
In \Cref{fig:programbench_spec_types}, we used the same tasks and models
to compare task-only input with prose in EARS
format~\cite{mavin2009ears}, original comments, signatures, local contracts,
or global invariants. We retained reference-executable access in every
condition and used reference source only for artifact generation.
In \Cref{fig:pass_rate}, we evaluated four models on 35 SCBench
problems~\cite{orlanski2026slopcodebench} and 13 SDDBench libraries
(\Cref{app:sddbench}). We supplied the task alone or one additional artifact,
using full catalogues for behavioral conditions and withholding reference
code and evaluation tests.
In \Cref{fig:erosion}, we measured erosion and verbosity on the same
reconstructions using the SCBench metrics~\cite{orlanski2026slopcodebench}.

\begin{figure}[t]
    \centering
    \begin{minipage}[t]{0.32\textwidth}
        \vspace{0pt}
        \includegraphics[width=\linewidth,trim=0 0 446 45,clip]{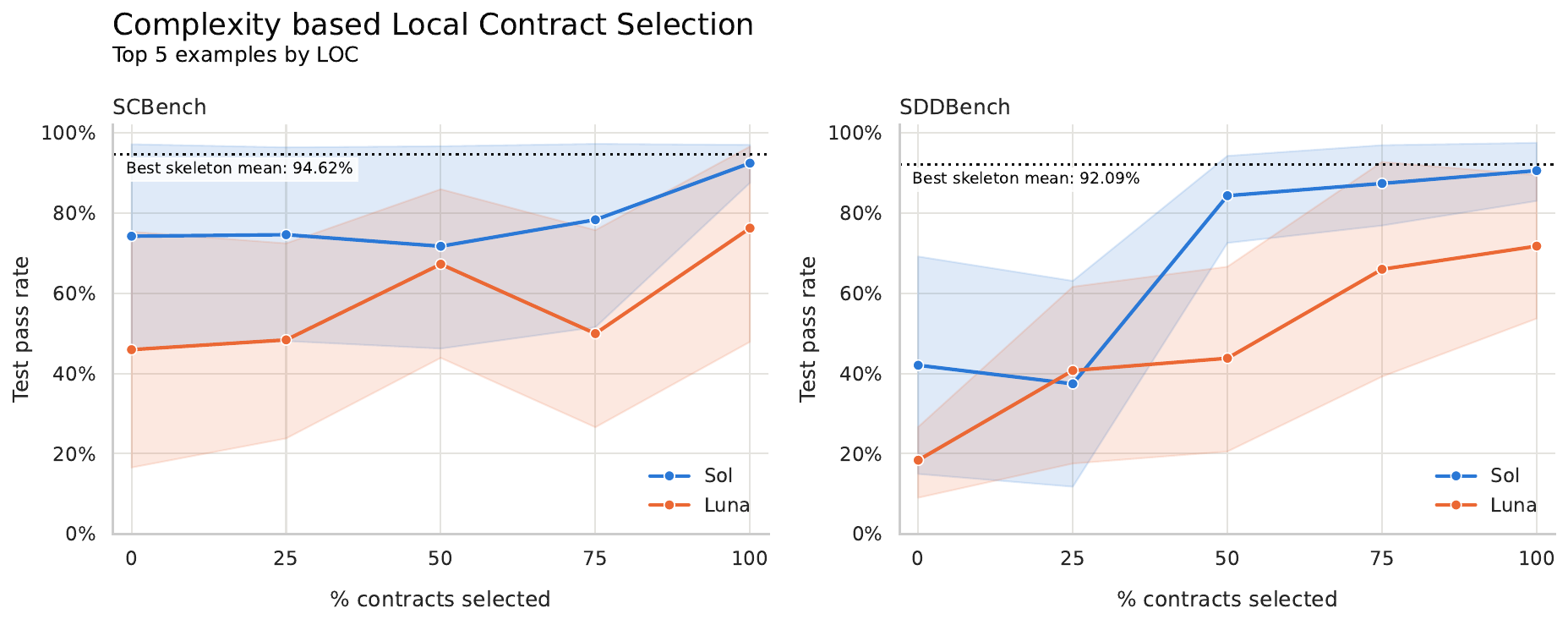}
    \end{minipage}\hfill
    \begin{minipage}[t]{0.32\textwidth}
        \vspace{0pt}
        \includegraphics[width=\linewidth,trim=445 0 0 45,clip]{figures/figure09ab.pdf}
    \end{minipage}\hfill
    \begin{minipage}[t]{0.32\textwidth}
        \vspace{0pt}
        \includegraphics[width=\linewidth,height=0.72\linewidth]{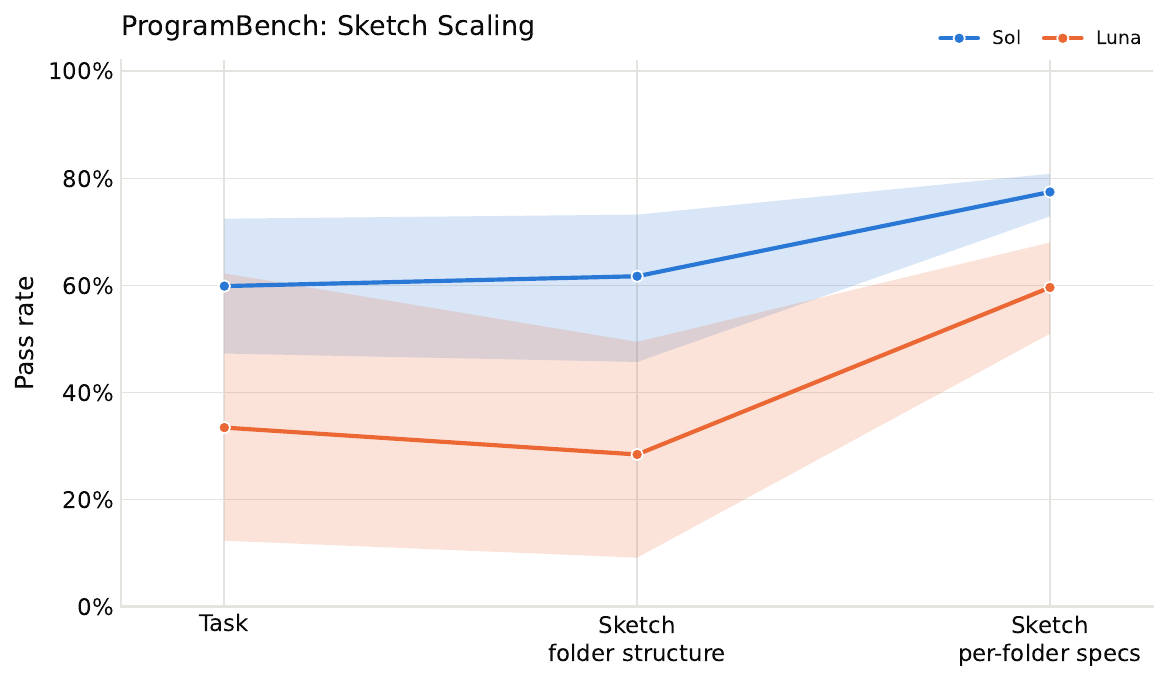}
    \end{minipage}
    \caption{\textbf{Skeleton coverage vs. pass rates}:
    \textbf{(Left, Middle)} We select a subset of the functions to add to the skeleton based on their complexity. Providing additional coverage increases pass-rate. \textbf{(Right)} Adding per-folder descriptions to a directory sketch improves the pass rates.}
    \label{fig:complexity_scaling}
    \label{fig:programbench_sketch_scaling}
\end{figure}

In \Cref{fig:architecture_scaling}~(Right), we allowed the four models 20, 40,
80, or 160 reference-executable queries on ProgramBench, with or without a
Global Invariants skeleton. We also supplied curated examples at matching
counts or allowed uncapped querying.
In \Cref{fig:architecture_scaling}, we used five repositories from
SCBench and four from SDDBench with four models. We reduced each architecture document to five nested
versions covering components, state, collaborations, layout, and per-file
sketches, and compared them with task-only input. We counted both task and
architecture text in the input size.
In \Cref{fig:complexity_scaling} (left, middle), we evaluated Sol and Luna on
five repositories per benchmark. We ranked callables by complexity, retained
nested 25/50/75\% subsets of their local contracts, and compared these with
task-only input and the full skeleton. We counted coverage by callable and
scored reconstructions with benchmark tests.
In \Cref{fig:cost_performance}, we compared accuracy and implementation cost
for the models and conditions in \Cref{fig:pass_rate}. We priced input, output, cache-read, and
cache-write tokens at fixed rates, excluding specification generation and
evaluation compute.
In \Cref{fig:brownfield_skeleton_coding}, we compared code-only and
agent-maintained specification workflows on 20 ProgramBench task pairs,
using Sol. We started both from the task and
reference-executable access, disclosed new failing tests between rounds,
and carried code and specifications into fresh conversations. We tracked
error, model cost, and reference calls through round~8 and measured code
quality at that point. For common maintainability, we combined source length,
cyclomatic complexity, and Halstead volume.

\begin{figure}[htbp]
    \centering
    \includegraphics[width=0.6\textwidth]{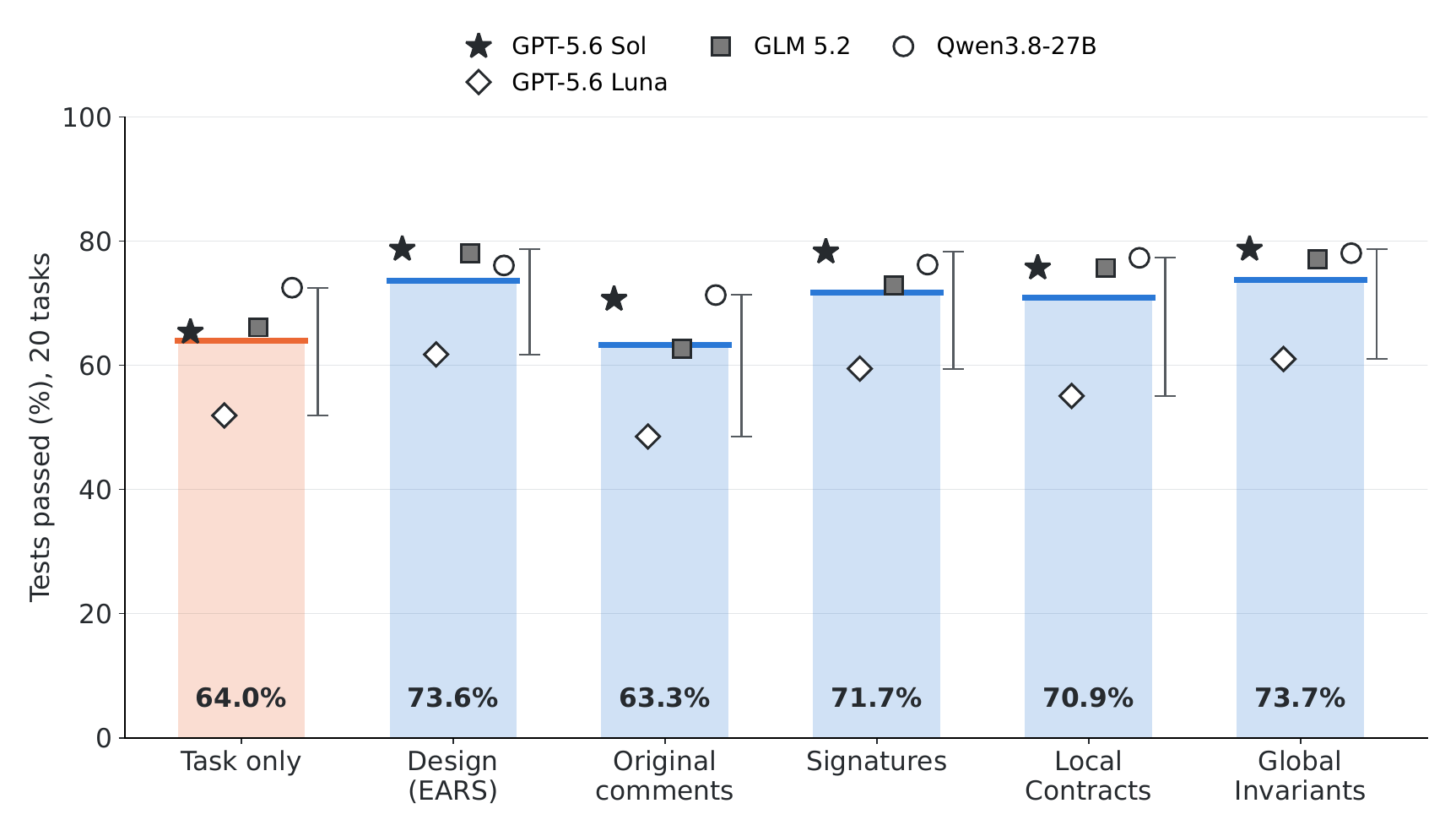}
    \caption{\textbf{Value of different types of skeletons}: Providing skeletons improves the pass-rates on ProgramBench despite query access to the oracle executable.}
    \label{fig:programbench_spec_types}
\end{figure}

\section{Proofs and remarks for \Cref{sec:method,sec:iterative_refinement}}
\label{app:proofs_method}

Throughout, for a fixed skeleton $S$, $G=\mathcal G\cap\mathcal H_S$ denotes
the acceptable programs with skeleton $S$, and $V_D$ the programs in
$\mathcal H_S$ that agree with the reference on every input tested in $D$.
Exact conditioning on $D$ gives
\[
    p_S(\mathcal G\mid D)=p_S(G\mid D)=\frac{p_S(G\cap V_D)}{p_S(V_D)}.
\]

Also, the sum in \cref{eq:goal_mixture} is over canonical skeletons defined by a fixed map
$\sigma:\mathcal H\to\mathcal S$, with
$\mathcal H_S=\{h\in\mathcal H:\sigma(h)=S\}$.
These classes partition $\mathcal H$. A partial skeleton is treated as
conditioning on the union of compatible canonical classes. Throughout,
we condition only on events of positive probability, and omit
zero-weight terms from the sum.

\subsection{Proof of \texorpdfstring{Theorem~\ref{thm:structural_end_to_end_separation}}{Theorem structural_end_to_end_separation}}
\label{app:proof_skeleton_separation}
\begin{proof}
Take inputs $x\in\{0,1\}^d$ and the parity target
$f^\ast(x)=x_1\oplus\cdots\oplus x_d$. A program is acceptable exactly when it computes
$f^\ast$ on every input. For $a\in\{0,1\}^d$, the two
skeletons permit respectively
\[
\begin{aligned}
    h_a^A(x)&=f^\ast(x)\oplus\mathbf 1\{a\ne0,\ x=a\},\\
    h_a^B(x)&=\bigoplus_{i=1}^d(1-a_i)x_i
             =f^\ast(x)\oplus\langle a,x\rangle,
\end{aligned}
\]
where the inner product is over $\mathbb F_2$. The first structure allows a
joint exceptional branch; the second combines contributions by XOR. Each
parameter setting has prior mass $2^{-d}$ under each skeleton, the $2^d$
programs of each skeleton have distinct behaviors, and both skeletons
implement the target exactly when $a=0$.

In $S_A$, each incorrect completion differs from the target only at $x=a$,
so all $2^d-1$ nonzero inputs must be tested to eliminate them. In $S_B$,
each test imposes $\langle a,x\rangle=0$. Inputs of rank $r$ leave
$2^{d-r}$ candidates, so $d$ tests are necessary and the standard basis
inputs suffice. Under either uniform prior, any suite leaving at least
two candidates has posterior correctness at most $1/2<1-\delta$, while a
suite leaving only the target has posterior correctness one. Hence
$N_\delta(S_A)=2^d-1$ and $N_\delta(S_B)=d$.

For the second part, let $D$ consist of the $d$ tests at the standard basis
vectors $e_1,\ldots,e_d$, which guarantee success under $S_B$. Under $S_A$,
the test at $e_i$ eliminates only $a=e_i$, so $2^d-d$ programs survive, one of
them correct, and the success probability is $1/(2^d-d)$. Now let $p_0$ put
mass $1/2$ on each skeleton, uniformly within each. Under $S_B$ only $a=0$
survives, with prior mass $2^{-(d+1)}$; under $S_A$ the survivors have total
mass $(2^d-d)\,2^{-(d+1)}$. Both surviving programs with $a=0$ are
acceptable, so
\[
    p_0(S_A\mid D)=\frac{2^d-d}{2^d-d+1},\qquad
    p_0(\mathcal G\mid D)=\frac{2\cdot2^{-(d+1)}}{(2^d-d+1)\,2^{-(d+1)}}
                        =\frac{2}{2^d-d+1}.
\]
\end{proof}

The information supplied by the same labeled sample depends on the
skeleton. Before any tests, fix a nonzero
input $x$ and supply $D_x=\{(x,f^\ast(x))\}$. Measuring information gain
in bits by the posterior-to-prior KL divergence gives
\begin{equation}\label{eq:same_sample_information}
\begin{aligned}
    \mathrm{IG}_{S_A}(D_x)
    &=\KL\!\left(p_{S_A}(\cdot\mid D_x)\,\|\,p_{S_A}\right)
      =\log_2\frac{2^d}{2^d-1},\\
    \mathrm{IG}_{S_B}(D_x)
    &=\KL\!\left(p_{S_B}(\cdot\mid D_x)\,\|\,p_{S_B}\right)
      =1.
\end{aligned}
\end{equation}
Indeed, the posteriors are uniform on $2^d-1$ and $2^{d-1}$ survivors,
respectively. Thus the same reference-labeled sample supplies only
$\Theta(2^{-d})$ bits under $S_A$, but one bit under $S_B$.
For comparison, under each skeleton let the unknown parameter $A$ be
uniform and generate the response $Y=h_A^S(x)$. The expected information
gains are
\[
    I(A;Y\mid S=S_A)=h_2(2^{-d}),\qquad
    I(A;Y\mid S=S_B)=1,
\]
where $x$ remains fixed and
$h_2(u)=-u\log_2u-(1-u)\log_2(1-u)$ is binary entropy.
The first response differs from $f^\ast(x)$ with probability $2^{-d}$;
the second splits the prior equally between the two possible outputs.
These mutual informations average over possible responses, whereas
\Cref{eq:same_sample_information} concerns the same observed response.

\subsection{Proof of \texorpdfstring{Theorem~\ref{thm:blue_local_separation}}{Theorem blue_local_separation}}
\label{app:proof_local_separation}
\begin{proof}
For each $a\in\{0,1\}^n$, choose
\[
    h_1(x,z)=\mathbf 1\{x=a_1\},\qquad
    h_i(x,z)=z\land\mathbf 1\{x=a_i\}\quad(2\le i\le n).
\]
This implements $F_a(x)=\mathbf 1\{x=a\}$. The same architecture also
implements $F_\emptyset(x)\equiv0$, for example with every transition
constant zero. Thus this small compositional class contains the
singleton-indicator family~\cite{liu2016teaching}.

Put prior mass $1/2$ on the constant-zero implementation, which serves as
the reference, and mass $2^{-(n+1)}$ on the implementation above of each
$F_a$. Let $G=\{h\in\mathcal H_S:F_h\equiv0\}$, so that $p_S(G)=1/2$. Every
test is labeled $0$, so adaptive selection is no more informative than a
fixed suite, and a test at input $x$ eliminates exactly the implementation
of $F_x$. For $m$ distinct tests,
\[
    p_S(G\mid D)=\frac{1/2}{1-m/2^{n+1}}
               =\frac{1}{2-m/2^n}.
\]
Repeated inputs provide no additional constraint. Solving
$p_S(G\mid D)\ge1-\delta$ gives the lower bound on $N_\delta(S)$. For
$\delta=0$ the bound equals $2^n$, and testing every input suffices. Locally,
four input pairs reveal the truth table of each transition under any prior,
so $4n$ local tests identify every transition table.
\end{proof}

\paragraph{Worst case without a prior.}
When all Boolean transition tables are allowed, the same construction shows
that exact identification of $F_h$ requires $2^n$ end-to-end queries in the
worst case, even with adaptive input selection: on a transcript consisting
entirely of zero outputs, every unqueried input $a$ leaves $F_a$
indistinguishable from $F_\emptyset$. Testing every input suffices for any
composite behavior.

The system has only $4n$ unknown truth-table bits, but its endpoint can
hide where a disagreement arose. In the construction, the state records
whether every input bit seen so far matches $a$. After a mismatch, later
stages keep the state at zero. An endpoint test then reveals neither which
stage rejected the input nor what the other stages would do. A local test
observes a component before subsequent transitions can erase its effect.
Software testability theory describes this distinction through fault
execution, infection of internal state, and propagation to an observable
output~\cite{voas1992pie}. Direct access to inputs and outputs improves a
component's controllability and observability~\cite{freedman1991testability}.

State merging permits this example; depth alone does not imply hardness.
For comparison, if $h_i(x,0)\ne h_i(x,1)$ for both values of $x$, every
transition has the form $h_i(x,z)=z\oplus u_i x\oplus v_i$.
The composition is then affine over bits and is identified by the $n+1$
endpoint queries at $0^n$ and the standard basis vectors. The exponential
result concerns exact behavioral identification and a specified prior:
it does not establish a bound for every prior or for prediction at fixed
nonzero error. Endpoint data need not identify the individual transition
tables uniquely, even when they identify the composite behavior.

\subsection{Proof of \texorpdfstring{Theorem~\ref{thm:blue_cost_decomposition}}{Theorem blue_cost_decomposition}}
\label{app:proof_cost_decomposition}
\begin{proof}
Here $\mathcal N_0(S)$ is the minimum total number of local and end-to-end
tests needed for posterior correctness one. All tests are answered by the
fixed acceptable reference $h^\ast$. Each minimum test count is $+\infty$
when no finite suite suffices.
For any classes $G_i$ containing the reference components and having finite
costs, combine local suites excluding their complements with a joint suite
excluding the unacceptable combinations that remain. All survivors belong
to $G$, and the reference survives with positive mass. This gives the upper
bound.

Conversely, take any successful mixed suite, and let $G_i$ be the component
implementations passing its local tests in slot $i$. The number of those
tests is at least $N_i(G_i)$. Its end-to-end part must exclude every
unacceptable member of $\mathcal H_S\cap\prod_i G_i$, since each such member
has positive prior mass and passes all local tests. That part therefore
costs at least $J_S(G_1,\ldots,G_k)$. Minimizing proves equality, including
the case in which no finite suite succeeds. The zero-cost characterization
follows because, without joint tests, all of these combinations survive.
\end{proof}

\subsection{Proof of \texorpdfstring{Proposition~\ref{prop:blue_behavioral_obstruction}}{Proposition blue_behavioral_obstruction}}
\label{app:proof_behavioral_obstruction}
\begin{proof}
The reference $h_0\in\mathcal H_S$ passes every test in $D$, and we assume
$p_S([h_0])>0$. Every member of $[h_0]$ also passes every test, so
conditioning on $D$ leaves the relative weights within $[h_0]$ unchanged.
Since $G\subseteq[h_0]\subseteq V_D$, exact conditioning gives
\[
    p_S(\mathcal G\mid D)=\frac{p_S(G)}{p_S(V_D)}
    \le \frac{p_S(G)}{p_S([h_0])}.
\]
If $1-\delta$ exceeds this ratio, no finite test suite attains the required
posterior correctness, so $N_\delta(S)=\infty$.
\end{proof}

\subsection{Proof of \texorpdfstring{Proposition~\ref{prop:component_rejection}}{Proposition component_rejection}}
\label{app:proof_component_rejection}
\begin{proof}
Each retry for component $i$ is an independent draw from the same
distribution $q_i$, with $\alpha_i>0$. Each check terminates and accepts
exactly the implementations in $G_i$.
By independence, a whole program passes every check with probability
$\prod_i\alpha_i$, so the number of proposals is geometric with mean
$\prod_i1/\alpha_i$, and an accepted program is distributed as
$q(\cdot\mid G)$. Resampling component $i$ until it passes takes a geometric
number of proposals with mean $1/\alpha_i$ and returns a sample of
$q_i(\cdot\mid G_i)$. By independence, the accepted components are jointly
distributed as $\prod_i q_i(\cdot\mid G_i)=q(\cdot\mid G)$.
\end{proof}

\subsection{Proof of \texorpdfstring{Proposition~\ref{prop:judge}}{Proposition judge}}
\label{app:proof_judge}
\begin{proof}
We retain the sampling assumptions of Proposition~\ref{prop:component_rejection},
with the checks specified in this proposition. All error rates are measured
under the fixed proposal distributions; in particular, when $\alpha_i<1$,
$\varepsilon_i=\Pr(\mathrm{accept}\mid h_i\notin G_i)$.
The checker uses fresh randomness, independent across attempts and components.
A violation of contract $i$ either is consistent with the docstring, with
probability $\beta_i$, and is then accepted, or is not and is accepted at rate
$e_i$; this gives $\varepsilon_i=\beta_i+(1-\beta_i)e_i$.

(i) Every implementation that satisfies the contract is accepted, so by
Bayes' rule an accepted component satisfies its contract with probability
$\alpha_i/(\alpha_i+(1-\alpha_i)\varepsilon_i)$.
If $\alpha_i=1$ or $\varepsilon_i=0$, this probability is one.

(ii) The accepted components are independent, so by (i) they all satisfy
their contracts with probability $\prod_i1/(1+x_i)$, where
$x_i=\varepsilon_i(1-\alpha_i)/\alpha_i$. Since $1/(1+x)\ge\max\{0,1-x\}$ for
$x\ge0$ and $\prod_i\max\{0,1-x_i\}\ge1-\sum_ix_i$, the lower bound follows.

For the whole-program comparison in (ii), write
$\alpha=q(G)=\prod_i\alpha_i$. If $\alpha=1$, correctness is certain.
Otherwise, let $\varepsilon=\Pr(\mathrm{accept}\mid h\notin G)>0$.
As in (i), the accepted program satisfies every contract with probability
\[
    \frac{\alpha}{\alpha+(1-\alpha)\varepsilon}
    \le \frac{\alpha}{\varepsilon},
\]
since $\alpha+(1-\alpha)\varepsilon\ge\varepsilon$.
When $\varepsilon=0$, every accepted program satisfies all contracts.
\end{proof}


\begin{thebibliography}{51}
\providecommand{\natexlab}[1]{#1}
\providecommand{\url}[1]{\texttt{#1}}
\expandafter\ifx\csname urlstyle\endcsname\relax
  \providecommand{\doi}[1]{doi: #1}\else
  \providecommand{\doi}{doi: \begingroup \urlstyle{rm}\Url}\fi

\bibitem[Dijkstra(1982)]{dijkstra1982role}
Edsger~W. Dijkstra.
\newblock {On the Role of Scientific Thought}.
\newblock In \emph{Selected Writings on Computing: A Personal Perspective},
  pages 60--66. Springer, 1982.

\bibitem[{Amazon Web Services}(2025)]{kiro2025}
{Amazon Web Services}.
\newblock {Kiro: An Agentic IDE for Spec-Driven Development}.
\newblock \url{https://kiro.dev}, 2025.

\bibitem[van~der Walt et~al.(2011)van~der Walt, Colbert, and
  Varoquaux]{vanderwalt2011numpy}
St{\'e}fan van~der Walt, S.~Chris Colbert, and Ga{\"e}l Varoquaux.
\newblock {The NumPy Array: A Structure for Efficient Numerical Computation}.
\newblock \emph{Computing in Science \& Engineering}, 13\penalty0 (2):\penalty0
  22--30, 2011.

\bibitem[Harris et~al.(2020)Harris, Millman, van~der Walt,
  et~al.]{harris2020numpy}
Charles~R. Harris, K.~Jarrod Millman, St{\'e}fan~J. van~der Walt, et~al.
\newblock {Array Programming with NumPy}.
\newblock \emph{Nature}, 585:\penalty0 357--362, 2020.

\bibitem[{\AA}str{\"o}m and Eykhoff(1971)]{astrom1971identification}
Karl~Johan {\AA}str{\"o}m and Pieter Eykhoff.
\newblock {System Identification---A Survey}.
\newblock \emph{Automatica}, 7\penalty0 (2):\penalty0 123--162, 1971.

\bibitem[Gold(1967)]{gold1967identification}
E.~Mark Gold.
\newblock {Language Identification in the Limit}.
\newblock \emph{Information and Control}, 10\penalty0 (5):\penalty0 447--474,
  1967.

\bibitem[Liu and Zhu(2016)]{liu2016teaching}
Ji~Liu and Xiaojin Zhu.
\newblock {The Teaching Dimension of Linear Learners}.
\newblock \emph{Journal of Machine Learning Research}, 17\penalty0
  (175):\penalty0 1--25, 2016.
\newblock \url{https://jmlr.org/papers/v17/15-630.html}.

\bibitem[Falck et~al.(2024)Falck, Wang, and Holmes]{falck2024martingale}
Fabian Falck, Ziyu Wang, and Chris Holmes.
\newblock {Is In-Context Learning in Large Language Models Bayesian? A
  Martingale Perspective}.
\newblock In \emph{Proceedings of the 41st International Conference on Machine
  Learning}, 2024.

\bibitem[Leino(2010)]{leino2010dafny}
K.~Rustan~M. Leino.
\newblock {Dafny: An Automatic Program Verifier for Functional Correctness}.
\newblock In \emph{Logic for Programming, Artificial Intelligence, and
  Reasoning}, volume 6355 of \emph{Lecture Notes in Computer Science}, pages
  348--370. Springer, 2010.
\newblock \url{https://doi.org/10.1007/978-3-642-17511-4_20}.

\bibitem[Lattuada et~al.(2023)Lattuada, Hance, Cho, Brun, Subasinghe, Zhou,
  Howell, Parno, and Hawblitzel]{lattuada2023verus}
Andrea Lattuada, Travis Hance, Chanhee Cho, Matthias Brun, Isitha Subasinghe,
  Yi~Zhou, Jon Howell, Bryan Parno, and Chris Hawblitzel.
\newblock {Verus: Verifying Rust Programs using Linear Ghost Types}.
\newblock \emph{Proceedings of the ACM on Programming Languages}, 7\penalty0
  (OOPSLA1):\penalty0 286--315, 2023.
\newblock \doi{10.1145/3586037}.

\bibitem[de~Moura and Ullrich(2021)]{demoura2021lean4}
Leonardo de~Moura and Sebastian Ullrich.
\newblock {The Lean 4 Theorem Prover and Programming Language}.
\newblock In \emph{Automated Deduction -- CADE 28}, volume 12699 of
  \emph{Lecture Notes in Computer Science}, pages 625--635. Springer, 2021.
\newblock \doi{10.1007/978-3-030-79876-5_37}.

\bibitem[Orlanski et~al.(2026)Orlanski, Roy, Yun, Shin, Gu, Ge, Adila, Roberts,
  Sala, and Albarghouthi]{orlanski2026slopcodebench}
Gabriel Orlanski, Devjeet Roy, Alexander Yun, Changho Shin, Alex Gu, Albert Ge,
  Dyah Adila, Nicholas Roberts, Frederic Sala, and Aws Albarghouthi.
\newblock {SlopCodeBench: Benchmarking How Coding Agents Degrade Over
  Long-Horizon Iterative Tasks}.
\newblock \emph{arXiv preprint arXiv:2603.24755}, 2026.
\newblock \url{https://arxiv.org/abs/2603.24755}.

\bibitem[Yang et~al.(2026)Yang, Lieret, Ma, Thakkar, Pedchenko, Sootla,
  McMilin, Yin, Hou, Synnaeve, Yang, and Press]{yang2026programbench}
John Yang, Kilian Lieret, Jeffrey Ma, Parth Thakkar, Dmitrii Pedchenko, Sten
  Sootla, Emily McMilin, Pengcheng Yin, Rui Hou, Gabriel Synnaeve, Diyi Yang,
  and Ofir Press.
\newblock {ProgramBench: Can Language Models Rebuild Programs From Scratch?}
\newblock \emph{arXiv preprint arXiv:2605.03546}, 2026.
\newblock \url{https://arxiv.org/abs/2605.03546}.

\bibitem[{OpenAI}(2026)]{openai2026gpt56}
{OpenAI}.
\newblock {GPT-5.6: Frontier Intelligence for Professional Work}.
\newblock \url{https://openai.com/index/gpt-5-6/}, 2026.

\bibitem[Claessen and Hughes(2000)]{claessen2000quickcheck}
Koen Claessen and John Hughes.
\newblock {QuickCheck: A Lightweight Tool for Random Testing of Haskell
  Programs}.
\newblock In \emph{Proceedings of the Fifth ACM SIGPLAN International
  Conference on Functional Programming}, pages 268--279, 2000.

\bibitem[Meyer(1992)]{meyer1992designbycontract}
Bertrand Meyer.
\newblock {Applying ``Design by Contract''}.
\newblock \emph{Computer}, 25\penalty0 (10):\penalty0 40--51, 1992.
\newblock \url{https://doi.org/10.1109/2.161279}.

\bibitem[Goldman and Kearns(1995)]{goldman1995teaching}
Sally~A. Goldman and Michael~J. Kearns.
\newblock {On the Complexity of Teaching}.
\newblock \emph{Journal of Computer and System Sciences}, 50\penalty0
  (1):\penalty0 20--31, 1995.

\bibitem[Angluin(1988)]{angluin1988queries}
Dana Angluin.
\newblock {Queries and Concept Learning}.
\newblock \emph{Machine Learning}, 2\penalty0 (4):\penalty0 319--342, 1988.

\bibitem[Kalman(1963)]{kalman1963mathematical}
Rudolf~E. Kalman.
\newblock {Mathematical Description of Linear Dynamical Systems}.
\newblock \emph{Journal of the Society for Industrial and Applied Mathematics,
  Series A: Control}, 1\penalty0 (2):\penalty0 152--192, 1963.

\bibitem[Voas(1992)]{voas1992pie}
Jeffrey~M. Voas.
\newblock {PIE: A Dynamic Failure-Based Technique}.
\newblock \emph{IEEE Transactions on Software Engineering}, 18\penalty0
  (8):\penalty0 717--727, 1992.
\newblock \url{https://doi.org/10.1109/32.153381}.

\bibitem[Freedman(1991)]{freedman1991testability}
Roy~S. Freedman.
\newblock {Testability of Software Components}.
\newblock \emph{IEEE Transactions on Software Engineering}, 17\penalty0
  (6):\penalty0 553--564, 1991.
\newblock \url{https://doi.org/10.1109/32.87281}.

\bibitem[Parnas(1972)]{parnas1972criteria}
David~L. Parnas.
\newblock {On the Criteria To Be Used in Decomposing Systems into Modules}.
\newblock \emph{Communications of the ACM}, 15\penalty0 (12):\penalty0
  1053--1058, 1972.
\newblock \url{https://doi.org/10.1145/361598.361623}.

\bibitem[Benveniste et~al.(2018)Benveniste, Caillaud, Nickovic, Passerone,
  Raclet, Reinkemeier, Sangiovanni-Vincentelli, Damm, Henzinger, and
  Larsen]{benveniste2018contracts}
Albert Benveniste, Beno{\^\i}t Caillaud, Dejan Nickovic, Roberto Passerone,
  Jean-Baptiste Raclet, Philipp Reinkemeier, Alberto Sangiovanni-Vincentelli,
  Werner Damm, Thomas~A. Henzinger, and Kim~G. Larsen.
\newblock {Contracts for System Design}.
\newblock \emph{Foundations and Trends in Electronic Design Automation},
  12\penalty0 (2--3):\penalty0 124--400, 2018.

\bibitem[Hoare(1969)]{hoare1969axiomatic}
C.~A.~R. Hoare.
\newblock {An Axiomatic Basis for Computer Programming}.
\newblock \emph{Communications of the ACM}, 12\penalty0 (10):\penalty0
  576--580, 583, 1969.
\newblock \url{https://doi.org/10.1145/363235.363259}.

\bibitem[Mavin et~al.(2009)Mavin, Wilkinson, Harwood, and Novak]{mavin2009ears}
Alistair Mavin, Philip Wilkinson, Adrian Harwood, and Mark Novak.
\newblock {Easy Approach to Requirements Syntax (EARS)}.
\newblock In \emph{17th IEEE International Requirements Engineering
  Conference}, pages 317--322, 2009.
\newblock \url{https://doi.org/10.1109/RE.2009.9}.

\bibitem[Clements et~al.(2010)Clements, Bachmann, Bass, Garlan, Ivers, Little,
  Merson, Nord, and Stafford]{clements2010architectures}
Paul Clements, Felix Bachmann, Len Bass, David Garlan, James Ivers, Reed
  Little, Paulo Merson, Robert Nord, and Judith Stafford.
\newblock \emph{{Documenting Software Architectures: Views and Beyond}}.
\newblock Addison-Wesley, second edition, 2010.

\bibitem[van Rossum et~al.(2014)van Rossum, Lehtosalo, and
  Langa]{vanrossum2014pep484}
Guido van Rossum, Jukka Lehtosalo, and {\L}ukasz Langa.
\newblock {PEP 484---Type Hints}.
\newblock \url{https://peps.python.org/pep-0484/}, 2014.

\bibitem[{Qwen Team}(2026)]{qwen2026qwen38}
{Qwen Team}.
\newblock {Qwen3.8-27B Model Card}.
\newblock \url{https://huggingface.co/Qwen/Qwen3.8-27B}, 2026.

\bibitem[{Z.AI}(2026)]{zai2026glm52}
{Z.AI}.
\newblock {GLM-5.2: Overview}.
\newblock \url{https://docs.z.ai/guides/llm/glm-5.2}, 2026.

\bibitem[{Strands Labs}()]{strandsaifunctions}
{Strands Labs}.
\newblock {Strands AI Functions}.
\newblock \url{https://github.com/strands-labs/ai-functions}.
\newblock Python library.

\bibitem[McCabe(1976)]{mccabe1976complexity}
Thomas~J. McCabe.
\newblock {A Complexity Measure}.
\newblock \emph{IEEE Transactions on Software Engineering}, SE-2\penalty0
  (4):\penalty0 308--320, 1976.

\bibitem[Endres et~al.(2024)Endres, Fakhoury, Chakraborty, and
  Lahiri]{endres2024nl2postcond}
Madeline Endres, Sarah Fakhoury, Saikat Chakraborty, and Shuvendu~K. Lahiri.
\newblock {Can Large Language Models Transform Natural Language Intent into
  Formal Method Postconditions?}
\newblock \emph{Proceedings of the ACM on Software Engineering}, 1\penalty0
  (FSE), 2024.

\bibitem[Zelikman et~al.(2023)Zelikman, Huang, Poesia, Goodman, and
  Haber]{zelikman2023parsel}
Eric Zelikman, Qian Huang, Gabriel Poesia, Noah~D. Goodman, and Nick Haber.
\newblock {Parsel: Algorithmic Reasoning with Language Models by Composing
  Decompositions}.
\newblock In \emph{Advances in Neural Information Processing Systems}, 2023.

\bibitem[Hong et~al.(2024)Hong, Zhuge, Chen, Zheng, Cheng, Zhang, Wang, Wang,
  Yau, Lin, Zhou, Ran, Xiao, Wu, and Schmidhuber]{hong2024metagpt}
Sirui Hong, Mingchen Zhuge, Jonathan Chen, Xiawu Zheng, Yuheng Cheng, Ceyao
  Zhang, Jinlin Wang, Zili Wang, Steven Ka~Shing Yau, Zijuan Lin, Liyang Zhou,
  Chenyu Ran, Lingfeng Xiao, Chenglin Wu, and J{\"u}rgen Schmidhuber.
\newblock {MetaGPT: Meta Programming for a Multi-Agent Collaborative
  Framework}.
\newblock In \emph{International Conference on Learning Representations}, 2024.

\bibitem[Jiang et~al.(2024)Jiang, Dong, Wang, Fang, Shang, Li, Jin, and
  Jiao]{jiang2024selfplanning}
Xue Jiang, Yihong Dong, Lecheng Wang, Zheng Fang, Qiwei Shang, Ge~Li, Zhi Jin,
  and Wenpin Jiao.
\newblock {Self-Planning Code Generation with Large Language Models}.
\newblock \emph{ACM Transactions on Software Engineering and Methodology},
  33\penalty0 (7):\penalty0 1--30, 2024.
\newblock \doi{10.1145/3672456}.

\bibitem[Bairi et~al.(2024)Bairi, Sonwane, Kanade, {Vageesh D C}, Iyer,
  Parthasarathy, Rajamani, Ashok, and Shet]{bairi2024codeplan}
Ramakrishna Bairi, Atharv Sonwane, Aditya Kanade, {Vageesh D C}, Arun Iyer,
  Suresh Parthasarathy, Sriram Rajamani, B.~Ashok, and Shashank Shet.
\newblock {CodePlan: Repository-level Coding using LLMs and Planning}.
\newblock \emph{Proceedings of the ACM on Software Engineering}, 1\penalty0
  (FSE):\penalty0 675--698, 2024.
\newblock \doi{10.1145/3643757}.

\bibitem[Luo et~al.(2025)Luo, Zhang, Liu, Wu, Liu, Huang, Huang, Yin, Xin,
  Zhan, Sun, Chen, Li, and Yang]{luo2025rpg}
Jane Luo, Xin Zhang, Steven Liu, Jie Wu, Jianfeng Liu, Yiming Huang, Yangyu
  Huang, Chengyu Yin, Ying Xin, Yuefeng Zhan, Hao Sun, Qi~Chen, Scarlett Li,
  and Mao Yang.
\newblock {RPG: A Repository Planning Graph for Unified and Scalable Codebase
  Generation}.
\newblock \emph{arXiv preprint arXiv:2509.16198}, 2025.
\newblock \url{https://arxiv.org/abs/2509.16198}.

\bibitem[Ruan et~al.(2025)Ruan, Zhang, and Roychoudhury]{ruan2025specrover}
Haifeng Ruan, Yuntong Zhang, and Abhik Roychoudhury.
\newblock {SpecRover: Code Intent Extraction via LLMs}.
\newblock In \emph{Proceedings of the 47th IEEE/ACM International Conference on
  Software Engineering (ICSE)}, pages 963--974, 2025.
\newblock \doi{10.1109/ICSE55347.2025.00080}.

\bibitem[Lahiri et~al.(2022)Lahiri, Fakhoury, Naik, Sakkas, Chakraborty,
  Musuvathi, Choudhury, von Veh, Inala, Wang, and Gao]{lahiri2022ticoder}
Shuvendu~K. Lahiri, Sarah Fakhoury, Aaditya Naik, Georgios Sakkas, Saikat
  Chakraborty, Madanlal Musuvathi, Piali Choudhury, Curtis von Veh,
  Jeevana~Priya Inala, Chenglong Wang, and Jianfeng Gao.
\newblock {Interactive Code Generation via Test-Driven User-Intent
  Formalization}.
\newblock \emph{arXiv preprint arXiv:2208.05950}, 2022.

\bibitem[Fakhoury et~al.(2024)Fakhoury, Naik, Sakkas, Chakraborty, and
  Lahiri]{fakhoury2024ticoder}
Sarah Fakhoury, Aaditya Naik, Georgios Sakkas, Saikat Chakraborty, and
  Shuvendu~K. Lahiri.
\newblock {LLM-Based Test-Driven Interactive Code Generation: User Study and
  Empirical Evaluation}.
\newblock \emph{IEEE Transactions on Software Engineering}, 50\penalty0
  (9):\penalty0 2254--2268, 2024.

\bibitem[Chen et~al.(2023)Chen, Zhang, Nguyen, Zan, Lin, Lou, and
  Chen]{chen2023codet}
Bei Chen, Fengji Zhang, Anh Nguyen, Daoguang Zan, Zeqi Lin, Jian-Guang Lou, and
  Weizhu Chen.
\newblock {CodeT: Code Generation with Generated Tests}.
\newblock In \emph{International Conference on Learning Representations}, 2023.
\newblock \url{https://openreview.net/forum?id=ktrw68Cmu9c}.

\bibitem[Kushnir et~al.(2026)Kushnir, Noorbakhsh, Sreedhar, Cheng, Liu,
  Ranganathan, Alizadeh, Kjolstad, and Subramanian]{kushnir2026designdocs}
Samuel Kushnir, Kimia Noorbakhsh, Kavya Sreedhar, Liqun Cheng, Ming Liu,
  Parthasarathy Ranganathan, Mohammad Alizadeh, Fred Kjolstad, and Suvinay
  Subramanian.
\newblock {Design Docs Are All You Need: An AI-native Machine-Learning
  Performance Tool}.
\newblock \emph{arXiv preprint arXiv:2609.05364}, 2026.
\newblock \url{https://arxiv.org/abs/2609.05364}.

\bibitem[He et~al.(2026)He, Lu, Ma, Ye, Wu, Liu, Bu, and Li]{he2026vibetospec}
Ruirui He, Hanqing Lu, Lezhi Ma, Kai Ye, Qiong Wu, Shangqing Liu, Lei Bu, and
  Xuandong Li.
\newblock {From Vibe to Spec: An Empirical Study to Demystify Specifications in
  Code Generation}.
\newblock In \emph{Proceedings of the 41st IEEE/ACM International Conference on
  Automated Software Engineering}, pages 2780--2792. Association for Computing
  Machinery, 2026.
\newblock \url{https://doi.org/10.1145/3832783.3837515}.

\bibitem[Chen et~al.(2026)Chen, Chang, Lin, Chawa, Chen, Wang, and
  Hassan]{chen2026specfirst}
Yihao Chen, Shi Chang, Feng Lin, Khaled Chawa, Boyuan Chen, Shaowei Wang, and
  Ahmed~E. Hassan.
\newblock {SpecFirst: Behavioral Specification Elicitation as a First-Class
  Step in Agent-Based Program Synthesis from Scratch}.
\newblock \emph{arXiv preprint arXiv:2607.27167}, 2026.
\newblock \url{https://arxiv.org/abs/2607.27167}.

\bibitem[Sun et~al.(2024)Sun, Sheng, Padon, and Barrett]{sun2024clover}
Chuyue Sun, Ying Sheng, Oded Padon, and Clark Barrett.
\newblock {Clover: Closed-Loop Verifiable Code Generation}.
\newblock In \emph{International Symposium on AI Verification}, 2024.

\bibitem[Solar-Lezama(2009)]{solarlezama2009sketching}
Armando Solar-Lezama.
\newblock {The Sketching Approach to Program Synthesis}.
\newblock In \emph{Programming Languages and Systems (APLAS 2009)}, volume 5904
  of \emph{Lecture Notes in Computer Science}, pages 4--13. Springer, 2009.
\newblock \doi{10.1007/978-3-642-10672-9_3}.

\bibitem[Poesia et~al.(2022)Poesia, Polozov, Le, Tiwari, Soares, Meek, and
  Gulwani]{poesia2022synchromesh}
Gabriel Poesia, Oleksandr Polozov, Vu~Le, Ashish Tiwari, Gustavo Soares,
  Christopher Meek, and Sumit Gulwani.
\newblock {Synchromesh: Reliable Code Generation from Pre-trained Language
  Models}.
\newblock In \emph{International Conference on Learning Representations}, 2022.
\newblock \url{https://openreview.net/forum?id=KmtVD97J43e}.

\bibitem[Aggarwal et~al.(2024)Aggarwal, Parno, and
  Welleck]{aggarwal2024alphaverus}
Pranjal Aggarwal, Bryan Parno, and Sean Welleck.
\newblock {AlphaVerus: Bootstrapping Formally Verified Code Generation through
  Self-Improving Translation and Treefinement}.
\newblock \emph{arXiv preprint arXiv:2412.06176}, 2024.
\newblock \url{https://arxiv.org/abs/2412.06176}.

\bibitem[Yang et~al.(2025)Yang, Li, Misu, Yao, Cui, Gong, Hawblitzel, Lahiri,
  Lorch, Lu, Yang, Zhou, and Lu]{yang2025autoverus}
Chenyuan Yang, Xuheng Li, Md~Rakib~Hossain Misu, Jianan Yao, Weidong Cui, Yeyun
  Gong, Chris Hawblitzel, Shuvendu Lahiri, Jacob~R. Lorch, Shuai Lu, Fan Yang,
  Ziqiao Zhou, and Shan Lu.
\newblock {AutoVerus: Automated Proof Generation for Rust Code}.
\newblock \emph{Proceedings of the ACM on Programming Languages}, 9\penalty0
  (OOPSLA2):\penalty0 3454--3482, 2025.
\newblock \doi{10.1145/3763174}.

\bibitem[Bursuc et~al.(2025)Bursuc, Ehrenborg, Lin, Astefanoaei, Chiosa,
  Kukovec, Singh, Butterley, Bizid, Dougherty, Zhao, Tan, and
  Tegmark]{bursuc2025vericoding}
Sergiu Bursuc, Theodore Ehrenborg, Shaowei Lin, Lacramioara Astefanoaei,
  Ionel~Emilian Chiosa, Jure Kukovec, Alok Singh, Oliver Butterley, Adem Bizid,
  Quinn Dougherty, Miranda Zhao, Max Tan, and Max Tegmark.
\newblock {A Benchmark for Vericoding: Formally Verified Program Synthesis}.
\newblock \emph{arXiv preprint arXiv:2509.22908}, 2025.
\newblock \url{https://arxiv.org/abs/2509.22908}.

\bibitem[Chen et~al.(2021)]{chen2021codex}
Mark Chen et~al.
\newblock {Evaluating Large Language Models Trained on Code}.
\newblock \emph{arXiv preprint arXiv:2107.03374}, 2021.
\newblock \url{https://arxiv.org/abs/2107.03374}.

\end{thebibliography}
\end{document}